\newtheorem{theorem}{Theorem}
\newtheorem{proposition}{Proposition}
\newtheorem{corollary}{Corollary}
\newtheorem{lemma}{Lemma}
\def\BibTeX{{\rm B\kern-.05em{\sc i\kern-.025em b}\kern-.08em
    T\kern-.1667em\lower.7ex\hbox{E}\kern-.125emX}}
\renewcommand{\eqref}[1]{(\ref{#1})}
\newcommand{\figref}[1]{Fig.~\ref{#1}}
\newcommand{\algref}[1]{Alg.~\ref{#1}}
\newcommand{\lineref}[1]{Line~\ref{#1}}
\newcommand{\secref}[1]{Section~\ref{#1}}
\newcommand{\corolref}[1]{Corollary~\ref{#1}}
\newcommand{\propref}[1]{Propositon~\ref{#1}}
\begin{document}
\title{Maximize the Long-term Average Revenue of Network Slice Provider via Admission Control Among Heterogeneous Slices}
\author{Miao Dai, \textit{Member, IEEE}, Gang Sun, \textit{Member, IEEE}, Hongfang Yu, \textit{Member, IEEE}, \\ and Dusit Niyato, \textit{Fellow, IEEE}
\thanks{Miao Dai is with the Key Laboratory of Optical Fiber Sensing and Communications (Ministry of Education), University of Electronic Science and Technology of China, Chengdu 611731, China (e-mail: daimiao@std.uestc.edu.cn).}
\thanks{Gang Sun is with the Key Laboratory of Optical Fiber Sensing and Communications (Ministry of Education), University of Electronic Science and Technology of China, Chengdu 611731, China, and also with Agile and Intelligent Computing Key Laboratory of Sichuan Province, Chengdu 610036, China (e-mail: gangsun@uestc.edu.cn).}
\thanks{Hongfang Yu is with the Key Laboratory of Optical Fiber Sensing and Communications (Ministry of Education), University of Electronic Science and Technology of China, Chengdu 611731, China, and also with Peng Cheng Laboratory, Shenzhen 518066, China (e-mail: yuhf@uestc.edu.cn).}
\thanks{Dusit Niyato is with School of Computer Science and Engineering, Nanyang Technological University, Singapore (e-mail: dniyato@ntu.edu.sg).}}


\maketitle

\begin{abstract}
Network slicing endows 5G/B5G with differentiated and customized capabilities to cope with the proliferation of diversified services, whereas limited physical network resources may not be able to support all service requests. Slice admission control is regarded as an essential means to ensure service quality and service isolation when the network is under burden. Herein, the scenario where rational tenants coexist with partially competitive network slice providers is adopted. We aim to maximize the long-term average revenue of the network operators through slice admission control, with the feasibility of multidimensional resource requirements, the priority differences among heterogeneous slices, and the admission fairness within each slice taken into account concurrently. We prove the intractability of our problem by a reduction from the Multidimensional Knapsack Problem (MKP), and propose a two-stage algorithm called MPSAC to make a sub-optimal solution efficiently. The principle of MPSAC is to split the original problem into two sub-problems; inter-slice decision-making and intra-slice quota allocation, which are solved using a heuristic method and a tailored auction mechanism respectively. Extensive simulations are carried out to demonstrate the efficacy of our algorithm, the results show that the long-term average revenue of ours is at least 9.6\% higher than comparisons while maintaining better priority relations and achieving improved fairness performance.
\end{abstract}

\begin{IEEEkeywords}
5G/B5G, Network Slicing, Slice Admission Control, Service Prioritization, MKP, Auction mechanism.
\end{IEEEkeywords}

\section{Introduction} \label{sec:introduction}


With the rapid development and continuous iteration of applications, there is an obvious trend of diversification and customization of network services. In the 5G/B5G era, services are broadly classified into three categories; eMBB, URLLC, and mMTC \cite{shafi20175g}, which require the network to support large bandwidth transmission under high mobility, reliable communication with extremely low latency, and massive device connections, respectively. Moreover, each of the three can be further subdivided into a variety of fine services. The difference in service demand is constantly strengthened and deepened with the emergence of new services. Therefore, various industries have put forward unprecedented requirements for differentiated and customized service capabilities of the underlying network \cite{zhang2017network}.

In order to meet the expectations and demands of users in 5G/B5G networks, network slicing has been proposed as an enabling technology to provide flexible customization capabilities for physical networks. It helps network operators to carve out logically independent virtual networks, known as slices, on commonly shared hardware facilities \cite{babbar2022role, abedin2022elastic}, and each slice can be adapted to a specific service type when carefully configured and tailored. This approach brings flexibility and efficacy for network operators to keep up with the evolution of services in a cost-effective manner. A network slice is just a design diagram, and only after being instantiated can it undertake the expected functions. However, maintaining the normal operation of slices requires sufficient resource supply. Otherwise, the virtual network will fail to guarantee the negotiated performance \cite{ko2022pdras}. This means that the network operator will face the financial penalty because of breaking the contract \cite{wu2021reinforcement}. Meanwhile, the quality of experience of users will significantly deteriorate or even completely lost.

The reality is that physical resources are always limited, and the circumstance that the available resources are insufficient to accommodate the aggregate demands of all users happens now and then. To prevent the network from getting overwhelmed by excessive user requests, slice admission control (SAC) is particularly important \cite{ojijo2020survey}. It works as a decision maker that determines which requests can be accepted immediately, and which requests need to be delayed for a while or rejected directly when the network is under a heavy load, thereby ensuring the performance of slices, avoiding intensifying resource competition and damaging the isolation and stability of services. In general, the network operators are responsible for the slice admission control, because they are the owners of the infrastructure and the executors of the slicing operations. They act as network slice providers (NSPs) to earn revenue by renting out slices to vertical service providers (VSPs), the latter use authorized slices to deliver specific services to the corresponding subscribers.

During the slice admission control procedure, resource feasibility is always the primary consideration, as it is closely related to whether the NSP can successfully obtain rental income. However, the resources consumed to run a slice are usually of multiple dimensions, mainly including communication resources (e.g., bandwidth) and computing resources (e.g., CPU and memory), and are needed to be dealt with simultaneously, which makes SAC difficult with the addition of service diversity and heterogeneity. Furthermore, there are natural priority differences among various services, and the slice admission control process may respect this difference to stimulate the capability of the network to make it more suitable for heterogeneous service performance requirements \cite{hossain2021priority}. Meanwhile, since the substrate network is a shared infrastructure, absolute priority is not advisable in that slices with lower priorities need to be supported by NSPs to prevent their subscribers from completely losing service opportunities \cite{haque20225g}. In addition, the rational behavior of slice tenants may affect the revenue of NSPs, which is especially prominent in the scenario of multiple NSPs and multiple VSPs \cite{boateng2022consortium, guan2018demand}. The competition among the participants is both a challenge and an opportunity \cite{wu2021distributed}. If an NSP can make an optimal admission decision, it will attract more slice tenants to establish cooperation with it, thereby expanding its potential revenue space. Otherwise, it will confront the dilemma that some of its tenants may gradually leave to the competitors.

In this work, we aim to propose an effective slice admission control method for a Multi-NSP-Multi-VSP (MNMV) scenario. It differs from the literature in that most of them only consider a single NSP, so that there is no competition and the market is relatively stable, which is no longer the case now. Our objective is to optimize the long-term average revenue of NSPs at the premise of guaranteeing the multidimensional resource demands of each slice, satisfying the inter-slice priority relations, and improving the intra-slice fairness. A confluence of these factors further complicates this issue, but existing works still lack attention to it. The main contributions are summarized below:

\begin{enumerate}
	\item We introduced a scenario where multiple NSPs and VSPs coexist to narrow the gap between the literature and reality. Herein, the slice sets of different NSPs overlap each other and VSPs are assumed rational. It means that there are competitions among peer entities and interactions between slice sellers and tenants can be strategic. The impatient behaviors of subscribers are also taken into account, which makes the scenario more practical and worth investigating.
	
	\item We have established a mathematical model for the online slice admission control problem with the goal of maximizing long-term average revenue, accounting for meeting the resource requirements of heterogeneous slices, maintaining the priority relationship across slices, and achieving fairness in the allocation of admission quotas within slices. Through a reduction from the multidimensional knapsack problem, we proved the NP-Hard property of our problem.
	
	\item Considering the intractability of finding the optimal solution, we designed an algorithm called Multi-participant Slice Admission Control (MPSAC) that resolves the problem in two stages to get a sub-optimal solution efficiently. The first stage handles the inter-slice admission decisions heuristically, and the second stage tackles the intra-slice quota allocation relying on a tailored single-parameter auction mechanism. The priority relation across slices is maintained in the first stage, and the fairness of quota allocation within each slice is promised by the auction without altering the inter-slice decisions.
	
	\item To demonstrate the advantages of MPSAC, we carried out extensive simulations to compare the performance of the proposed scheme with the state of the arts. Results show that our algorithm outperforms other baselines in terms of long-term average revenue, priority maintenance, and intra-slice fairness.
\end{enumerate}

The rest of the paper is organized as follows. \secref{sec:related works} introduces the work of predecessors and discusses their main ideas and limitations. \secref{sec:system architecture} elaborates the system architecture and formulates the mathematical model. In \secref{sec:algorithm design}, the intractability of the problem is analyzed theoretically and the algorithm is devised subsequently. \secref{sec:numerical simulations} presents simulation settings and evaluates the performance of algorithms. Finally, we conclude the paper and raise thinking about future works in \secref{sec:conclusion and future works}.

\section{Related Works} \label{sec:related works}



Admission control is a concomitant problem of the network. Traditionally, network operators have only needed to provide approximate resource provisioning for services in a one-size-fits-all manner. Nowadays, the application of network slicing makes it develop into slice admission control (SAC), and puts forward higher requirements for network operators, such as providing differentiated QoS guarantees for heterogeneous slices, complying with slice priority differences, and maintaining certain fairness of resource scheduling.

\subsection{QoS Guarantees}

The fundamental principle of SAC is to ensure the satisfaction of the user's QoS requirements, which is inseparable from accurate performance abstraction and strict resource guarantee. The most prevalent approach is to first map QoS to concrete resource demands, as advocated in \cite{guo2019enabling, fossati2020multi}, and then defer or filter out excessive requests that lead to resource deficits if accepted. On the contrary, some researchers believe that converting QoS into resource demands is not stable enough. For example, the authors of \cite{kak2021towards} mentioned that the resource overhead of the same slice may be different on various infrastructures, so they stuck with the original Service Level Agreement (SLA) to characterize slices. On the basis of ensuring the QoS requirements, network operators usually tend to shift their focus on SAC to exploring the possibility of providing additional features to cater to the market or respond to government orders.

\subsection{Priority Differences}

The emergence of network slicing has created the paradigm of Slice-as-a-Service (SlaaS), where slices are tightly bound to services and reasonably inherit their priorities. This phenomenon has been captured and taken into account when conducting slice admission control and resource scheduling by some existing works. The authors of \cite{ginthor20215g, yarkina2022multi, yarkina2022performance} believe that high-priority slices have the right to snatch resources from low-priority ones. Wherein, paper \cite{ginthor20215g} schedules physical resource blocks (PRB) for deterministic traffic (always more privileged) and eMBB traffic, the former is allocated a part of PRBs and allowed to preempt a fraction of PRBs reserved for eMBB traffic when the margin is insufficient. While the priority of tenants is adjusted dynamically in \cite{yarkina2022multi, yarkina2022performance}; When the number of users admitted to the tenant does not reach the number agreed in the contract, its priority keeps 1 or gradually decreases otherwise. SAC based on iterative convex optimization is carried out in \cite{yarkina2022multi} to permit high-priority tenants to seize the quotas of low-priority ones that exceed the contracted number, it is further enhanced with machine learning method in \cite{yarkina2022performance} for faster solutions to make it suitable for RAN.

From the perspective of users, preemptive behavior may lead to serious QoE degradation for vulnerable parties, thereby more researchers have adopted a gentle approach to maintaining priority. The easiest way is to access slice requests strictly in order of priority, which is what \cite{guo2019enabling} does. It introduces the idea of dual priority by using the virtual slice that accommodates one or multiple actual slices, the quotas are first allocated in the principle of the earliest deadline first among virtual slices and the internal order is based on the priority given by the service type. A more flexible approach is to impose the priority directly or indirectly as a weighting factor on the optimization goal, which refers to revenue in \cite{haque20225g}, number of collision-free preambles in \cite{gedikli2022deep} and flow acceptance ratio in \cite{li2023slice}. Another benefit of this approach is that it blends naturally with existing optimization methods, e.g., deep Q-learning, actor-critic paradigm, and heuristic one employed in \cite{haque20225g, gedikli2022deep, li2023slice} respectively. The authors of \cite{abedin2022elastic} have made a combination of the two. They manage to minimize the aggregate Age of Information (AoI) of terminal equipment in the Industrial Internet of Things. Service priorities there constitute the preference list for base stations to associate users and are converted into weights of the corresponding AoI so that each base station can treat the users connected to it differently when allocating PRBs.

\subsection{Fairness Considerations}

The research on the priority of SAC also arouses the thinking about fairness. The authors of \cite{seah2021combined} use the \textit{Jain}'s fairness index \cite{jain1984quantitative} defined on the bandwidth supply-demand ratio to measure the fairness of resource scheduling among the three types of slices. The potential limitation is that a consistent bandwidth supply-demand ratio does not necessarily mean that users can obtain similar QoE, because the QoS sensitivity of these slices to a single resource type may be quite different. Therefore, there are considerable researches \cite{wang2021inter, modina2022multi} that prefer to use indicators closer to services and users to measure the performance of heterogeneous slices. The authors of \cite{wang2021inter} think that the SLA violation ratio is a possible choice. The metric becomes utility in \cite{modina2022multi}, where the infrastructure provider (InP) lends multidimensional resources to service providers to promote the \textit{$\alpha$-fairness} of both the overall utility of each service provider and the utility obtained at different service locations of a provider. A general framework with Ordered Weighted Averaging (OWA) operator is proposed in \cite{fossati2020multi}, it assigns weights to elements according to their quality when aggregating them, instead of always assigning a fixed weight to an element. By adjusting the weights, this framework can incorporate several well-known criteria for fairness measurement. And different slice satisfaction indicators can also be supported as long as the elements can be properly defined and calculated.

The works above do not take slice priorities into account, possibly influenced by the intuitive contradiction of the two concepts. Other researchers have a more open mind, they argue that fairness and priority are not opposite, but unified with each other. The authors of \cite{ogryczak2014fair} believe that fairness does not imply equal treatment, and those of \cite{caballero2017multi} think that a decision can only be considered fair if it takes into account the contribution variance of all entities. In \cite{nasser2004optimal}, a lower handoff dropping probability is provided for high-priority requests, and the gaps of which between adjacent priorities are controlled under a specific threshold to reflect fairness. We have also adopted a similar idea in our previous work \cite{dai2022psaccf}, where high-priority slices are guaranteed higher cumulative acceptance ratios, and fairness is pursued by improving the evenness of all cumulative acceptance ratio gaps between adjacent priorities. It is undeniable that the consideration of priority and fairness enriches the significance of SAC, making it more comprehensive and practical.

\subsection{Strategic Behaviors}

In addition to the above factors related to slices, some works also consider the specificity of the market, such as the strategic behavior of tenants and the impatient behavior of users. The literature \cite{caballero2018network} and \cite{caballero2019network} assume that tenants are rational, they can know the total request information of other tenants, so as to adjust their request decisions to maximize their utility. The authors deduce the formula of the best response dynamics, and assert in \cite{caballero2018network} that when tenants have data rate requirements, the game can converge to a region near the Nash Equilibrium (NE) with the existence of SAC, otherwise, it can directly converge to the NE in \cite{caballero2019network}. The authors of \cite{boateng2022consortium} assume that both InPs and tenants are rational. InPs determine their prices of the RAN spectrum and tenants decide the demands, both parties aim to optimize their respective utility. It is modeled as a Stackelberg game with multiple leaders and multiple followers, and a multi-agent deep reinforcement learning (MADRL) method is proposed to find a Stackelberg Equilibrium (SE). The impatient behavior is reflected in \cite{han2020multiservice}; tenants can give up initiating a new slice request or withdraw a request that has already been made when foreseeing or having already waited for a long waiting time. For this reason, the authors propose a SAC paradigm based on heterogeneous queues to fully schedule available resources and improve the slice provider's revenue.

The above works have conducted unique and valuable research on network slicing. However, to the best of our knowledge, there is no work on SAC in the MNMV market that simultaneously considers the multidimensional resource feasibility, slice priority differences, and intra-slice fairness while assuming both rational tenants and impatient users. It is a more sophisticated and realistic issue that dramatically increases the challenges network operators confront in pursuit of their goals (e.g., revenue, utility). Therefore, it deserves attention and in-depth investigation.

\section{System Architecture} \label{sec:system architecture}
In this work, we consider a more realistic scenario where multiple Network Slice Providers (NSPs) and Vertical Service Providers (VSPs) perform their duties to deliver QoS-guaranteed services to end users. \figref{fig:scenario} shows the details.

\begin{figure}[!tbh]
	\centering
	\includegraphics[scale=0.55]{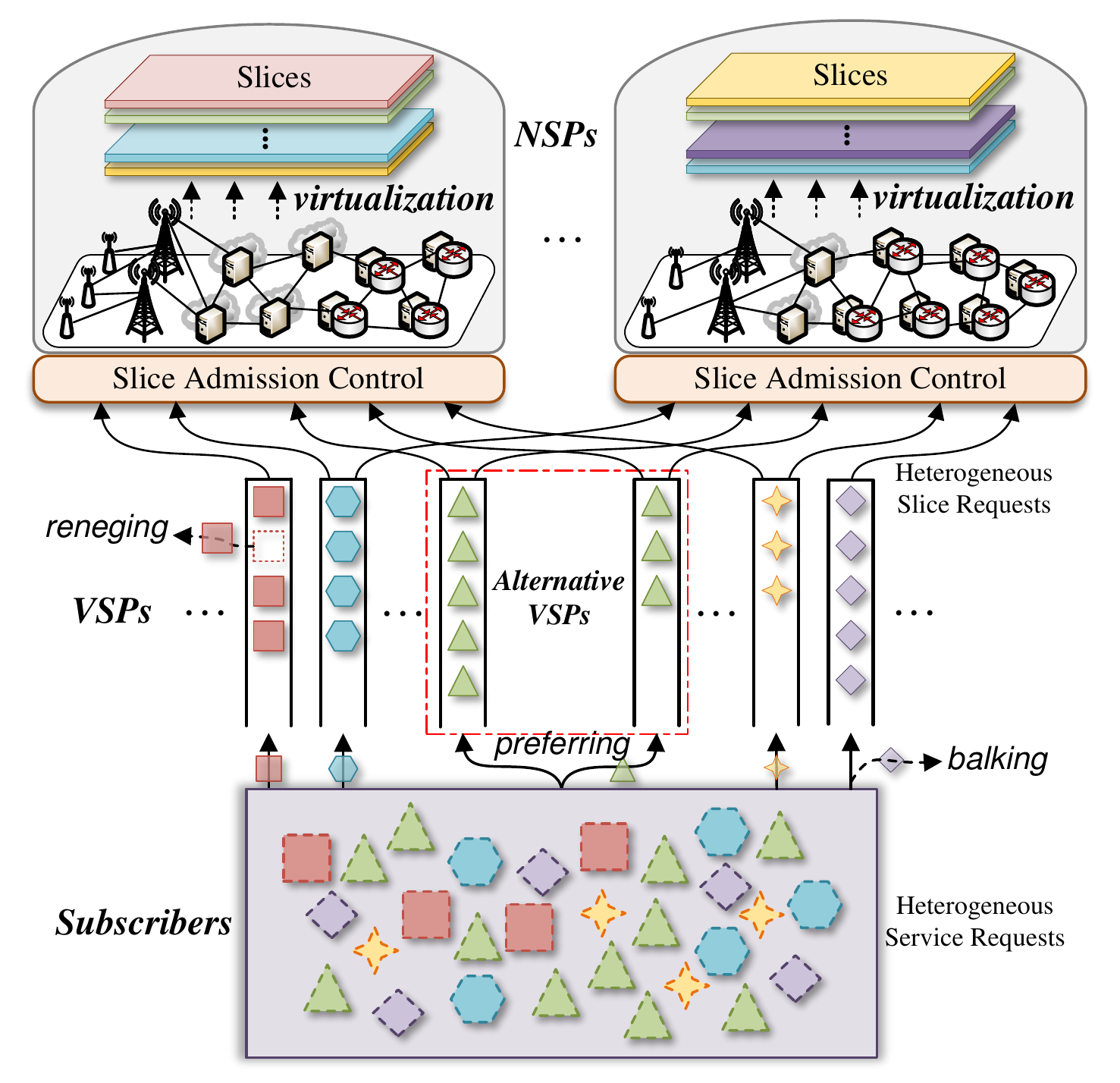}
	\caption{The scenario with multiple NSPs and heterogeneous VSPs.}
	\label{fig:scenario}
\end{figure}

\subsection{Participants}
We consider three basic market participants here \cite{sexton2020provisioning}:
\begin{enumerate}
	\item \textit{Subscribers}. They are the applicants and users of the services. Generally, subscriptions are initiated by end users through terminal devices with different levels of quality and key performance indicators.
	
	\item \textit{VSPs}. They are companies or organizations that provide customized Internet services in specific industries. Most of them build their own app, website, or client to deliver purposely designed, developed, and maintained service content or data to their subscribers.
	
	\item \textit{NSPs}. They are the owners of the substrate network. They abstract logically independent virtual networks, namely slices, from generic hardware with the help of SDN/NFV technologies. By instantiating network slices, NSPs provide the carrier of service entities to VSPs and work as the actual bearer of user traffic.
	
	Due to the differences in types and deployment locations of network equipment held by distinct NSPs, the slice categories that they support can partially overlap \cite{li2023slice}.
\end{enumerate}

\subsection{Interaction Pattern}
\figref{fig:scenario} describes the interaction process between the three parties. Subscribers first initiate requests to corresponding VSPs. VSPs maintain their respective queues to hold the received service requests and then determine the number of slices to apply for each optional NSP that can implement their service entities. NSPs gather all slice requests, and decide the access quota granted to each VSP considering resource feasibility and some other concerns (e.g. priority and fairness). After obtaining slice authorizations, VSPs establish their service entities and serve the waiting requests in a First Come First Serve (FCFS) manner. Upon a subscriber ends an ongoing service, the VSP serving it immediately raises a termination message to the NSP. Then, the NSP withdraws the slice permission and recycles the resources released.

During the interaction, participants are assumed rational in that they will observe and roughly analyze the interaction state to make decisions that are most beneficial to them.

\subsubsection{For subscribers}
There are mainly three possible behaviors: \textit{reneging}, \textit{balking} \cite{wang2010queueing} and \textit{preferring}.
\begin{itemize}
	\item \textit{Reneging}. A VSP may not be able to meet service requests immediately upon receipt, but put them in the queue in the order of arrival and wait for subsequent scheduling.
	
	However, waiting too long can make subscribers cancel the corresponding request and retreat from the queue. This behavior is called \textit{reneging} and can be simulated by setting a maximum tolerable period for each service request.
	
	\item \textit{Balking}. It refers to the behavior that a subscriber eventually gives up submitting a planned request due to a non-empty queue. When a subscriber needs service, it first observes the backlog of requests in the corresponding queue. The longer the queue, the less motivated the subscriber is to announce a new request because it is easier to renege after entering the queue. A probability function that monotonically decreases with the queue length is a commonly adopted approach to characterize this behavior.
	
	Note that the queue length is private information to the VSP, but we assume that the VSP will reveal it to the subscribers. This is believed to be a win-win for VSPs and subscribers; queue status can help subscribers make more informed choices, thereby avoiding frequent reneging that deteriorates VSP's reputation and user enthusiasm \cite{han2020multiservice}.

	\item \textit{Preferring}. In a scenario with multiple participants, there may exist several independent VSPs that provide the same kind of service, i.e., the \textit{Alternative VSPs} (AVSPs) depicted in \figref{fig:scenario}. In this case, subscribers of the service can choose any VSP with equal probability if they have the same queue length, but always prefer the one with the shortest queue otherwise.
	
	It is worth mentioning that we exclude the influence of service QoS and price on user choice. In the SlaaS paradigm, the same slice instance means that they provide the same QoS. Moreover, in a competitive environment, prices are mainly determined by the market \cite{buttyan2007security, tran2020resource} and can also be considered consistent.
\end{itemize}
 
\subsubsection{For VSPs}
In order to provide services to subscribers, VSPs need to rent physical resources from NSPs in units of slice instances. Some services are only suitable for a specific NSP, others are available from multiple NSPs.

When several NSPs can provide slices required by a VSP, the VSP will evaluate the service capabilities of these NSPs before initiating slice requests. Specifically, the VSP compares slice acceptance ratio and inter-slice admission fairness of them, and then announces more slice requests to the NSP with higher two metrics, because such NSP is believed to promise more admission opportunities stably.

Without loss of generality, we assume that NSPs are willing to publish this information, as it is an important way for them to attract tenants from other competitors.

\subsubsection{For NSPs}
The purpose of NSP is to earn revenue by leasing slice instances to VSPs. One constraint that must be considered in advance is to guarantee sufficient resources to each slice instance. Resource deficit cannot ensure the quality of slices to meet the SLA. In this case, NSP suffers from penalties. Therefore, admitting concurrent slice requests that exceed NSP's resource-carrying capability will not yield more revenue, but serious losses instead.

As in previous work \cite{haque20225g, li2023slice}, we require NSPs to prioritize heterogeneous slices, but at the same time, they are additionally expected to balance the admission quotas across and within slices. There are mainly two reasons; as the owners of the underlying network, NSPs are responsible for provisioning slices to the general public, which means that serving high-priority slices at the cost of starving low-priority ones is unacceptable. Moreover, when multiple NSPs coexist, the ability to retain the priority relationship and balance the admission quantities helps to attract more tenants from competitors, so as to obtain a wider source of revenue.


\subsection{Mathematical Model}
There are $N$ NSPs and we denote them with a set $\mathcal{N}=\{1,2,\dots,N\}$, NSP~$n$ can support $|\mathcal{S}_n|$ kinds of slices, where $\mathcal{S}_n$ is the set of slice labels. We use slice labels to indicate priority, bigger labels mean higher priorities and all participants have a consensus on the relative priority relationship between slices \cite{debbabi2022overview}. We follow the approach in \cite{guo2019enabling, fossati2020multi} to use resource overhead vectors to describe the QoS requirements of slice instances. Different slices will consume different amounts of multidimensional resources, let $\boldsymbol{c}_{n,s}=(c^1_{n,s}, c^2_{n,s}, \dots, c^K_{n,s})^T$ denote the column vector of minimum quantities that NSP~$n$ should spend on each $s$-type slice instance for the sake of guaranteeing QoS. The overall resource capacities of NSP~$n$ is $\boldsymbol{C}_n=(C^1_n, C^2_n, \dots, C^K_n)^T$, where $K$ indicates the total resource types. NSP~$n$ charges at least $p_{n,s}$ for each active slice instance of type~$s$ per time slot, we name it the base price. Since it is widely accepted that the price of a slice is primarily driven by market factors \cite{buttyan2007security, tran2020resource}, we assume if several NSPs provide the same slice, they adopt a unified base price for the slice and all the prices are clearly marked and disclosed.

Let $\mathcal{V}=\{1,2,\dots,V\}$ denote the set of VSPs. The optional NSPs that VSP~$v$ can choose constitute a subset $\mathcal{N}_v \subseteq \mathcal{N}$, and VSPs that ask for slice~$s$ are included in a subset $\mathcal{V}_s \subseteq \mathcal{V}$. Each VSP has a private estimation of the value of a slice instance, which may vary due to their different operational expenditures when the prices of alternative services are (assumed) the same. Let $b^*_v$ represent the valuation of VSP~$v$ about the slice~$s_v$, i.e., the label of slice that supports VSP~$v$.

We employ a slotted admission process. Slot~$t$ starts with the procedure of death and birth. Active slice instances that have reached their maximum lifetime expire and the resources occupied are withdrawn immediately, and those that hang up in the queue also depart if their maximum waiting time runs out. We denote the maximum lifetime and waiting time of a slice/service instance of type~$s$ as $T^L_s$ and $T^W_s$ respectively, which obey \textit{Exponential} distribution \cite{ancker1963some} with parameter $1/\lambda^L_s$ and $1/\lambda^W_s$. Meanwhile, subscribers generate new service requests according to \textit{Poisson} distribution with parameter $\lambda^G_s$, they first complete the \textit{preferring} step to select a VSP, say $v$, and then enter the queue with the probability given by \eqref{eq:balking}:
\begin{equation}
	\label{eq:balking}
	P_v = e^{-\beta_v\cdot l_v},
\end{equation}

\noindent where $\beta_v\in[0,1]$ is the willingness that a subscriber of VSP~$v$ to wait and $l_v$ is the real-time queue length. We add a superscript $t$ to $l_v$, i.e., $l^t_v$, to denote the queue length of VSP~$v$ after the death and birth procedure in time slot~$t$.

Then, VSP~$v$ checks if there are any requests in its queue. If so, it manages to rent slice instances from all optional NSPs. When only one NSP is available, VSP~$v$ just matches each service request in the queue with a slice request sent to the NSP. Otherwise, VSP~$v$ strategically determines the number of requests to each optional NSP~$n$ according to \eqref{eq:request prop}:
\begin{equation}
	\label{eq:request prop}
	l^t_{v,n}=\left[\frac{\alpha \cdot e^{r^{t-1}_{n,s_v}}}{\sum_{n^\prime \in \mathcal{N}_v}{e^{r^{t-1}_{n^\prime,s_v}}}} + 
	\frac{(1-\alpha)\cdot e^{f^{t-1}_n}}{\sum_{n^\prime \in \mathcal{N}_v}{e^{f^{t-1}_{n^\prime}}}}\right]\cdot l^t_v, \forall n \in \mathcal{N}_v,
\end{equation}

\noindent where $\alpha\in[0,1]$ is the weighting factor. $r^t_{n,s}$ is the cumulative acceptance ratio of NSP~$n$ on slice~$s$ at the end of slot~$t$, given by \eqref{eq:cumulative acceptance ratio}. $f^t_n$ is the inter-slice admission balance of NSP~$n$ at the end of slot~$t$, given by \eqref{eq:inter slice fairness}.
\begin{equation}
	\label{eq:cumulative acceptance ratio}
	r^t_{n,s} = \frac{\sum_{\tau=1}^{t}{\sum_{v\in\mathcal{V}_s}a^\tau_{n,v}}}{\sum_{\tau=1}^{t}{\sum_{v\in\mathcal{V}_s}l^{\tau}_{v,n}}},
\end{equation}
\begin{equation}
	\label{eq:inter slice fairness}
	f^t_n = 
	\begin{cases}
		\frac{\left[\sum_{s\in\mathcal{S}_n}{\left(r^t_{s^\prime}-r^t_s\right)}\right]^2}{\left(|\mathcal{S}_n|-1\right)\cdot\sum_{s\in\mathcal{S}_n}{\left(r^t_{s^\prime}-r^t_s\right)^2}}, & \text{if } r^t_{n,s}\leq r^t_{n,s^\prime},\\
		0, & \text{otherwise.}
	\end{cases},
\end{equation}

Specifically, $a^\tau_{n,v}$ in \eqref{eq:cumulative acceptance ratio} denotes the number of slice quotas that NSP~$n$ grants to VSP~$v$ at time slot~$\tau$, thereby the numerator (denominator) of \eqref{eq:cumulative acceptance ratio} indicates the number of slice request of type~$s$ that NSP~$n$ has cumulatively served (received) up to slot~$t$. While $s^\prime$ in \eqref{eq:inter slice fairness} denotes the smallest label in set~$\mathcal{S}_n$ that is bigger than $s$. Therefore, \eqref{eq:inter slice fairness} means that if the cumulative acceptance ratio of slices does not increase with their priorities, NSP~$n$ is considered (by VSPs) to be totally failed to maintain admission balance among slices at slot~$t$. Otherwise, NSP~$n$ succeeds and the degree of balance is measured by the \textit{Jain}'s fairness index \cite{jain1984quantitative} defined on the gaps of adjacent cumulative acceptance ratios. \eqref{eq:inter slice fairness} is proposed to measure fairness in our previous work \cite{dai2022psaccf}, we continue to use it here and give it a more rigorous name; inter-slice fairness.

Actually, \eqref{eq:request prop} formulates the principle VSPs adopt when selecting NSPs; they prefer those with higher cumulative acceptance ratio and willing to respect slice priority, or more precisely, with better inter-slice fairness, and hence propose more slice requests to such NSPs. Intuitively, a higher cumulative acceptance ratio indicates that slice requests are more likely to be satisfied at this NSP, and better inter-slice fairness can ensure that even if there are a large number of high-priority requests from other VSPs, those with lower priority will not completely lost opportunities, i.e., a better service balance is promised.


When NSPs have received all slice requests from VSPs, they make admission decisions with the goal of maximizing long-term average base revenue. To do this, they must guarantee the resource feasibility to avoid penalties and try to preserve priority differences to absorb more tenants. (P1) shows the mathematical model of the optimization problem.
\begin{align}
	\text{(P1): } \underset{a^t_{n,v}}{\max} \underset{T \rightarrow  \infty}{\text{lim}}\frac{1}{T}\sum_{t=1}^{T}{\sum_{s \in \mathcal{S}_n}{p_{n,s} \cdot \left(A^t_{n,s}+\sum_{v \in \mathcal{V}_s}{a^t_{n,v}}\right)}} \label{eq:obj 1} \\
	\text{s.t. } \sum_{s \in \mathcal{S}_n}{\boldsymbol{c}_{n,s} \cdot \left(A^t_{n,s}+\sum_{v \in \mathcal{V}_s}{a^t_{n,v}}\right)} \leq \boldsymbol{C}_n, \forall t, \tag{\ref{eq:obj 1}a} \label{eq:obj 1a} \\
	a^t_{n,v} \leq l^t_{v,n}, \forall v \in \mathcal{V}_n, \forall t, \tag{\ref{eq:obj 1}b} \label{eq:obj 1b} \\
	a^t_{n,v} \geq 0, \forall v \in \mathcal{V}_n, \forall t, \tag{\ref{eq:obj 1}c} \label{eq:obj 1c} \\
	a^t_{n,v} \in \mathbb{Z}, \forall v \in \mathcal{V}_n, \forall t, \tag{\ref{eq:obj 1}d} \label{eq:obj 1d} \\
	\text{(SC) } r^t_{n,s} \leq r^t_{n,s^\prime}, \forall s,s^\prime \in \mathcal{S}_n, \forall t, \tag{\ref{eq:obj 1}e} \label{eq:obj 1e}
\end{align}

The optimization object of (P1) is the long-term average base revenue of NSP~$n$. The decision variables $\left\{a^t_{n,v}|v \in \mathcal{V}_n \right\}$ are the number of slice instances that should be additionally leased to VSP~$v$ at slot~$t$. $A^t_{n,s}$ in \eqref{eq:obj 1} refers to the number of slices of type~$s$ that were admitted by NSP~$n$ previously and still keep alive during slot~$t$. NSP~$n$ allocates $\boldsymbol{c}_{n,s}$ amount of physical resources to each admitted $s$-type slice request to ensure that the slice instance can meet the QoS requirement. Therefore, \eqref{eq:obj 1a} corresponds to the resource feasibility constraint of NSP~$n$; the aggregate resources occupied by active and newly admitted slice instances at slot~$t$ cannot exceed its resource capacity. \eqref{eq:obj 1b}-\eqref{eq:obj 1d} restrict the decision variables to be non-negative integers and up-bounded by the requested volume of the corresponding VSP. \eqref{eq:obj 1a}-\eqref{eq:obj 1d} are hard constraints, which must be strictly satisfied at any time.

Besides, \eqref{eq:obj 1e} gives another special constraint, it conveys the vision that the cumulative acceptance ratio of a high-priority slice is not less than that of a low-priority one. However, due to the existence of the above hard constraints, there may not always be a point in the feasible region of \eqref{eq:obj 1a}-\eqref{eq:obj 1d} that also meets the last constraint. Therefore, we introduce \eqref{eq:obj 1e} as a soft constraint (SC) to emphasize differences in slice priorities. A soft constraint is not a necessary condition, but one to be satisfied when it is possible.

Note that (P1) does not address the issue of intra-slice fairness, we defer it to \secref{subsec:intra-slice quota auction}.

\section{Algorithm Design} \label{sec:algorithm design}
The model formulated in (P1) is essentially an online decision problem, which means that each NSP can only determine the variables of the current slot, rather than altering those committed previously or those not yet present. It is worth noting that (P1) is even more complex than ordinary online problems, because the decision at slot~$t$ will affect the subsequent selections of VSPs, thereby influencing the value of some constants in the model in future slots.

In this section, we first demonstrate the intractability of (P1) through its per-slot version with a reduction from the Multi-dimensional Knapsack Problem (MKP), which is known to be NP-Hard \cite{freville2004multidimensional}. Then, we propose a two-stage SAC algorithm to get a sub-optimal solution in polynomial time. The first stage solves the problem of inter-slice admission based on Dominant Resource Revenue Efficiency (DRRE), and the second stage completes the intra-slice quota allocation with a well-designed auction mechanism.

By the way, although some researchers have explored detailed resource allocation schemes, we pay more attention to SAC itself in this work. Therefore, resources are allocated to each admitted slice request according to the resource overhead of maintaining the normal operation of the slice instance.

\subsection{Theoretical Analysis}
To the best of our knowledge, there is no existing method that can optimally solve the online MINLP like (P1) in an time-efficient way. It is widely adopted to optimize the objective for each slot independently, which is illustrated in \cite{salvat2018overbooking} to have the ability to well approximate the original one when the system time horizon is relatively larger than slot. We employ this greedy idea to design our algorithm at the cost of appropriately sacrificing the optimality of the solution. (P2) gives the per-slot version of (P1) after some equivalent transformations. (P2) is believed less intractable than (P1) because it neither considers the impact of current decisions on the follow-up, nor involves unknown information in the future \cite{challa2019network}.
\begin{align}
	\text{(P2): } \underset{a^t_{n,v}}{\max} \sum_{s \in \mathcal{S}_n}{\left(p_{n,s} \cdot \sum_{v \in \mathcal{V}_s}{a^t_{n,v}}\right)} \label{eq:obj 2} \\
	\text{s.t. } \sum_{s \in \mathcal{S}_n}{\left(\boldsymbol{c}_{n,s} \cdot \sum_{v \in \mathcal{V}_s}{a^t_{n,v}}\right)} \leq \boldsymbol{C}_n - \sum_{s \in \mathcal{S}_n}{\boldsymbol{c}_{n,s} \cdot A^t_{n,s}}, \tag{\ref{eq:obj 2}a} \label{eq:obj 2a} \\
	a^t_{n,v} \leq l^t_{v,n}, \forall v \in \mathcal{V}_n, \tag{\ref{eq:obj 2}b} \label{eq:obj 2b} \\
	a^t_{n,v} \geq 0, \forall v \in \mathcal{V}_n, \tag{\ref{eq:obj 2}c} \label{eq:obj 2c} \\
	a^t_{n,v} \in \mathbb{Z}, \forall v \in \mathcal{V}_n, \tag{\ref{eq:obj 2}d} \label{eq:obj 2d} \\
	\text{(SC) } r^t_{n,s} \leq r^t_{n,s^\prime}, \forall s,s^\prime \in \mathcal{S}_n, \tag{\ref{eq:obj 2}e} \label{eq:obj 2e}
\end{align}

\begin{theorem} \label{theorem:1}
	(P2) is NP-Hard.
\end{theorem}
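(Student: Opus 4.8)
The plan is to establish NP-hardness by a polynomial-time reduction from the decision version of the Multidimensional Knapsack Problem (MKP), which is NP-complete \cite{freville2004multidimensional}: given $J$ items with profits $p_j\ge 0$ and $K$-dimensional non-negative integer weight vectors $\boldsymbol{w}_j$, capacities $\boldsymbol{W}\in\mathbb{Z}^K_{\ge 0}$, and a target $\Theta$, decide whether there exists $\boldsymbol{x}\in\{0,1\}^J$ with $\sum_{j}\boldsymbol{w}_j x_j\le\boldsymbol{W}$ componentwise and $\sum_{j}p_j x_j\ge\Theta$. It suffices to embed an arbitrary such instance into a single-slot instance of (P2) so that ``MKP answers yes'' becomes ``the optimum of (P2) is at least $\Theta$''.

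The construction proceeds as follows. Fix a single NSP $n$ and one slot $t$, and set $A^t_{n,s}=0$ for every slice type, so that the right-hand side of \eqref{eq:obj 2a} equals $\boldsymbol{C}_n$. Introduce $J$ slice types $\mathcal{S}_n=\{s_1,\dots,s_J\}$ (in an arbitrary priority order) and $J$ VSPs, with VSP $j$ the unique member of $\mathcal{V}_{s_j}$ and $\mathcal{N}_j=\{n\}$. Take the resource dimension $K$ to be the number of knapsack constraints, set $\boldsymbol{C}_n=\boldsymbol{W}$, and for each $j$ put $p_{n,s_j}=p_j$, $\boldsymbol{c}_{n,s_j}=\boldsymbol{w}_j$, and $l^t_{j,n}=1$. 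Under these choices \eqref{eq:obj 2b}--\eqref{eq:obj 2d} force every $a^t_{n,j}\in\{0,1\}$, \eqref{eq:obj 2a} becomes $\sum_{j}\boldsymbol{w}_j a^t_{n,j}\le\boldsymbol{W}$, and the objective \eqref{eq:obj 2} becomes $\sum_{j}p_j a^t_{n,j}$. Identifying $x_j$ with $a^t_{n,j}$, the feasible set and the objective value of the constructed (P2) instance coincide with those of the MKP instance, so the MKP instance is a yes-instance if and only if the constructed (P2) instance has optimum at least $\Theta$, and the whole transformation is clearly computable in time polynomial in the size of the MKP input.

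The step I expect to be the main obstacle is arguing that the two features of (P2) that are absent from a plain MKP do not disturb this correspondence. First, the soft constraint (SC) of \eqref{eq:obj 2e}: I would rely on the paper's stipulation that (SC) is to be satisfied only when possible and never at the expense of the hard constraints \eqref{eq:obj 2a}--\eqref{eq:obj 2d}; hence it neither shrinks the feasible region nor reduces the attainable objective, so the decision quantity our reduction controls is unaffected and (SC) may be dropped. Second, the quantities $A^t_{n,s}$, $l^t_{v,n}$ and the ratios $r^t$ are produced by the online interaction rather than chosen freely; here I would note that (P2) is posed as a self-contained per-slot problem whose data are simply given as input, and, to be safe, that the values we require are in fact reachable states of the dynamics of \secref{sec:system architecture} --- e.g.\ the very first slot, starting from empty queues, after one arrival per VSP (which enters with probability $e^{0}=1$) and with each VSP facing a single optional NSP, yields exactly $A^t_{n,s}=0$ and $l^t_{j,n}=1$. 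Checking in addition that the constructed instance respects the structural form of (P2) (non-negative integer overheads and prices, one VSP per slice, a legitimate priority labelling) keeps the reduction inside the problem class, and the NP-hardness of (P2) follows.
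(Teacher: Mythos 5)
Your reduction is essentially the paper's own: both embed an MKP instance into a single-slot instance of (P2) by identifying item values with base prices $p_{n,s}$, weight vectors with the overhead vectors $\boldsymbol{c}_{n,s}$, the knapsack capacity with the residual capacity in \eqref{eq:obj 2a}, forcing binary decisions via $l^t_{v,n}=1$, and discarding the soft constraint \eqref{eq:obj 2e} on the grounds that it does not restrict the feasible region. Your write-up is somewhat more careful than the paper's (explicit decision version, one VSP per slice so that the per-VSP and per-slice variables coincide, and reachability of the constructed state under the system dynamics), but the underlying argument is the same.
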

\begin{proof}
	Let us consider the multidimensional knapsack problem in \eqref{eq:MKP}, where $v_i$ and $\boldsymbol{w}_i$ are the value and volume vector of item~$i$ respectively, and $\mathcal{I}$ is the set of items. \eqref{eq:MKPa} is equivalent to \eqref{eq:obj 2a} by mapping the right hand side of \eqref{eq:obj 2a} to $\boldsymbol{W}$, $\left\{\boldsymbol{c}_{n,s}|s \in \mathcal{S}_n \right\}$ to $\left\{\boldsymbol{w}_i|i \in \mathcal{I} \right\}$, and $\left\{\sum_{v \in \mathcal{V}_s}{a^t_{n,v}}|s \in \mathcal{S}_n \right\}$ to $\left\{x_i|i \in \mathcal{I} \right\}$. \eqref{eq:MKPb} can be realized by combining \eqref{eq:obj 2b}-\eqref{eq:obj 2d} when each $l^t_{v,n}$ is set to 1. The two objectives depicted in \eqref{eq:obj 2} and \eqref{eq:MKP} are also consistent when we map $\left\{p_{n,s}|s \in \mathcal{S}_n \right\}$ to $\left\{v_i|i \in \mathcal{I} \right\}$. \eqref{eq:obj 2e} is a soft constraint, ignoring it makes (P2) easier and the simplified version is exactly MKP after the above mappings.
	\begin{align}
		\text{MKP: } \underset{x_i}{\max} \sum_{i \in \mathcal{I}}{v_i \cdot x_i} \label{eq:MKP} \\
		\text{s.t. } \sum_{i \in \mathcal{I}}{\boldsymbol{w}_i \cdot x_i} \leq \boldsymbol{W}, \tag{\ref{eq:MKP}a} \label{eq:MKPa} \\
		x_i \in \{0,1\}, \forall i \in \mathcal{I}, \tag{\ref{eq:MKP}b} \label{eq:MKPb}
	\end{align}
	
	Therefore, any problem that belongs to MKP, it can be solved by an algorithm designed for (P2) with \eqref{eq:obj 2e} abandoned. In other words, MKP can reduce to a simplified version of (P2). Since MKP is a well-known NP-Hard problem, so does (P2).
\end{proof}

\begin{corollary} \label{corollary:1}
	(P1) is NP-Hard.
\end{corollary}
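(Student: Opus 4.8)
The plan is to derive \corolref{corollary:1} directly from \theorem \ref{theorem:1} by exhibiting (P2) as a special case of (P1). The key observation is that (P2) was obtained from (P1) merely by taking the per-slot version under a greedy decomposition; from the standpoint of complexity theory, an instance of (P2) is nothing but an instance of (P1) restricted to a single time slot with no pre-existing active slices. So the first step would be to formalize this embedding: given an arbitrary instance of (P2) with data $\{p_{n,s}\}$, $\{\boldsymbol{c}_{n,s}\}$, $\boldsymbol{C}_n$, $\{l^t_{v,n}\}$, I construct an instance of (P1) over the horizon $T=1$, setting $A^1_{n,s}=0$ for all $s$ (equivalently, assume the network starts empty and $T^L_s$ is such that no slice from before slot~$1$ survives). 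Then the objective \eqref{eq:obj 1} collapses to $\sum_{s\in\mathcal{S}_n}p_{n,s}\cdot\sum_{v\in\mathcal{V}_s}a^1_{n,v}$, which is exactly \eqref{eq:obj 2}; constraint \eqref{eq:obj 1a} with $A^1_{n,s}=0$ becomes \eqref{eq:obj 2a}; and \eqref{eq:obj 1b}--\eqref{eq:obj 1e} coincide termwise with \eqref{eq:obj 2b}--\eqref{eq:obj 2e}.

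The second step is to note that the reduction is polynomial-time: the constructed (P1)-instance has the same size as the (P2)-instance up to constants, and solving the single-slot (P1) yields precisely an optimal solution to the given (P2). Combined with \theorem \ref{theorem:1}, which already established (P2) is NP-Hard, this chain gives a polynomial reduction from an NP-Hard problem to (P1), hence (P1) is NP-Hard. Alternatively — and this is the cleaner phrasing I would probably adopt — I would observe that the entire reduction in the proof of \theorem \ref{theorem:1} (from MKP, via the maps $\boldsymbol{W}\mapsto\boldsymbol{C}_n-\sum_s\boldsymbol{c}_{n,s}A^t_{n,s}$, $\boldsymbol{w}_i\mapsto\boldsymbol{c}_{n,s}$, $v_i\mapsto p_{n,s}$, $x_i\mapsto\sum_{v\in\mathcal{V}_s}a^t_{n,v}$ with each $l^t_{v,n}=1$) can be carried out equally well against (P1) by fixing $T=1$: MKP reduces to a simplified version of (P1), so (P1) is NP-Hard. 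Either route is a one-paragraph argument.

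I do not anticipate a genuine obstacle here; the corollary is essentially a bookkeeping remark. The only subtlety worth a sentence is justifying that the "per-slot version" is a \emph{restriction} of (P1) and not a relaxation — i.e., that every feasible single-slot schedule extends (trivially) to a feasible $T=1$ schedule for (P1) with the same objective value — which is immediate because with $T=1$ the two models have literally the same variables, constraints, and objective once $A^1_{n,s}=0$. One should also be slightly careful that the soft constraint (SC) is handled identically in both problems, so dropping it to reach MKP is legitimate in the (P1) setting too; this is already implicit in the proof of \theorem \ref{theorem:1}. Beyond that, the proof is a direct corollary and needs no new machinery.
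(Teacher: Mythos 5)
Your route is the same one the paper takes, only spelled out: the paper's entire proof of \corolref{corollary:1} is the single remark that (P1) is ``more intractable'' than (P2), which is precisely the embedding you formalize. There is, however, one technical point your write-up glosses over and that is the only place where ``bookkeeping remark'' hides an actual step. The objective of (P1) is $\lim_{T\rightarrow\infty}\frac{1}{T}\sum_{t=1}^{T}(\cdot)$, so there is no such thing as ``an instance of (P1) over the horizon $T=1$'': if your constructed instance generates requests only in slot~1 and nothing thereafter, then \emph{every} admission policy achieves long-term average revenue $0$, and the reduction distinguishes nothing. The standard repair is to replicate the given (P2) instance identically in every slot --- deterministic arrivals of the same request profile each slot, slice lifetimes of exactly one slot, and waiting times of zero so no queue state carries over --- which makes the slots decouple into independent copies of the single-slot problem, so the optimal long-term average equals the optimal value of that (P2) instance and the reduction (in either of your two phrasings, via (P2) or directly from MKP) goes through. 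With that one-sentence patch your argument is correct, and indeed more rigorous than the paper's own, which never addresses the limit at all.
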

\begin{proof}
	Considering that (P1) is more intractable than (P2), and (P2) is proved NP-Hard in Theorem~\ref{theorem:1}, we can easily conclude that (P1) is also NP-Hard.
\end{proof}

Slice admission control must be performed quickly and efficiently to meet the timeliness requirements of services. However, it is a consensus that NP-Hard problems cannot be solved in polynomial time unless suboptimal solutions are acceptable. Thus, our idea of solving (P2) is to give up searching for the optimal solution and to find a feasible solution with a better objective value.

We substitute $\sum_{v \in \mathcal{V}_s}{a^t_{n,v}}$ in (P2) with a single variable $a^t_{n,s}$, which indicates all the quotas that NSP~$n$ gives to slice~$s$, and sum \eqref{eq:obj 2b}-\eqref{eq:obj 2d} over $v \in \mathcal{V}_s$. These operations construct a new problem (P3), which has fewer variables than (P2).
\begin{align}
	\text{(P3): } \underset{a^t_{n,s}}{\max}\sum_{s \in \mathcal{S}_n}{p_{n,s} \cdot a^t_{n,s}} \label{eq:obj 3} \\
	\text{s.t. } \sum_{s \in \mathcal{S}_n}{\boldsymbol{c}_{n,s} \cdot \left(A^t_{n,s}+a^t_{n,s}\right)} \leq \boldsymbol{C}_n, \tag{\ref{eq:obj 3}a} \label{eq:obj 3a} \\
	a^t_{n,s} \leq \sum_{v \in \mathcal{V}_s}{l^t_{v,n}}, \forall s \in \mathcal{S}_n, \tag{\ref{eq:obj 3}b} \label{eq:obj 3b} \\
	a^t_{n,s} \geq 0, \forall s \in \mathcal{S}_n, \tag{\ref{eq:obj 3}c} \label{eq:obj 3c} \\
	a^t_{n,s} \in \mathbb{Z}, \forall s \in \mathcal{S}_n, \tag{\ref{eq:obj 3}d} \label{eq:obj 3d} \\
	\text{(SC) } r^t_{n,s} \leq r^t_{n,s^\prime}, \forall s,s^\prime \in \mathcal{S}_n, \tag{\ref{eq:obj 3}e} \label{eq:obj 3e}
\end{align}

\begin{lemma} \label{lemma:1}
	Any feasible solution of (P3) can be mapped to a feasible solution of (P2) in polynomial time, and both have the same objective value.
\end{lemma}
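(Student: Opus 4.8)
The plan is to exhibit an explicit constructive mapping: take the per-slice quotas $\{a^t_{n,s}\}_{s\in\mathcal{S}_n}$ of a feasible solution of (P3) and split each $a^t_{n,s}$ among the VSPs in $\mathcal{V}_s$ by a simple greedy water-filling, then check that the resulting per-VSP vector $\{a^t_{n,v}\}$ satisfies all constraints of (P2) and attains the same objective value, and finally observe that the construction runs in polynomial time.

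First I would note that each VSP is tied to a single slice type $s_v$, so the sets $\mathcal{V}_s$, $s\in\mathcal{S}_n$, are pairwise disjoint; hence it suffices to define the split independently inside each $\mathcal{V}_s$. Fix $s$, enumerate $\mathcal{V}_s=\{v_1,\dots,v_m\}$ in an arbitrary order, set $R_0=a^t_{n,s}$, and for $j=1,\dots,m$ put $a^t_{n,v_j}=\min\{l^t_{v_j,n},\,R_{j-1}\}$ and $R_j=R_{j-1}-a^t_{n,v_j}$. Since constraint \eqref{eq:obj 3b} gives $a^t_{n,s}\le\sum_{v\in\mathcal{V}_s}l^t_{v,n}$, the residual $R_m$ equals $0$, so $\sum_{v\in\mathcal{V}_s}a^t_{n,v}=a^t_{n,s}$; and because every $l^t_{v,n}$ and every $a^t_{n,s}$ is an integer, each $a^t_{n,v_j}$ is a nonnegative integer. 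This loop costs $O(|\mathcal{V}_s|)$ steps, so the whole mapping is $O(\sum_{s\in\mathcal{S}_n}|\mathcal{V}_s|)$, i.e. polynomial.

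Next I would verify feasibility for (P2). Constraints \eqref{eq:obj 2c} and \eqref{eq:obj 2d} are immediate from the construction, and \eqref{eq:obj 2b} holds since $a^t_{n,v_j}=\min\{l^t_{v_j,n},R_{j-1}\}\le l^t_{v_j,n}$. For the resource constraint \eqref{eq:obj 2a}, substitute $\sum_{v\in\mathcal{V}_s}a^t_{n,v}=a^t_{n,s}$ for every $s$: the left-hand side of \eqref{eq:obj 2a} becomes $\sum_{s\in\mathcal{S}_n}\boldsymbol{c}_{n,s}\cdot a^t_{n,s}$, which by \eqref{eq:obj 3a} is bounded componentwise by $\boldsymbol{C}_n-\sum_{s\in\mathcal{S}_n}\boldsymbol{c}_{n,s}\cdot A^t_{n,s}$, i.e. exactly \eqref{eq:obj 2a}. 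The soft constraint \eqref{eq:obj 2e} is inherited verbatim, because $r^t_{n,s}$ depends on the decision variables only through $\sum_{v\in\mathcal{V}_s}a^t_{n,v}$ by \eqref{eq:cumulative acceptance ratio}, and this quantity equals $a^t_{n,s}$; hence \eqref{eq:obj 2e} holds for the mapped solution iff \eqref{eq:obj 3e} held for the original. Finally the objectives coincide: $\sum_{s\in\mathcal{S}_n}p_{n,s}\sum_{v\in\mathcal{V}_s}a^t_{n,v}=\sum_{s\in\mathcal{S}_n}p_{n,s}\,a^t_{n,s}$, which is the objective of (P3).

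There is no deep obstacle here; the lemma is essentially a bookkeeping argument. The one point that genuinely requires care is the \emph{existence} of a valid integer split, which is exactly where constraint \eqref{eq:obj 3b} together with the integrality of the queue lengths $l^t_{v,n}$ is used: without that upper bound the greedy loop could terminate with leftover quota, and the mapped vector would fail $\sum_{v\in\mathcal{V}_s}a^t_{n,v}=a^t_{n,s}$ and hence possibly under- or over-shoot the target, breaking both feasibility and the objective-preservation claim.
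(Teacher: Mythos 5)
Your proof is correct and takes essentially the same route as the paper's: the paper likewise reduces the task to splitting each $a^t_{n,s}$ among $\mathcal{V}_s$ subject to \eqref{eq:two obj equal}, invokes \eqref{eq:obj 3b} to guarantee the split exists, and names the very same greedy fill-to-capacity procedure as the polynomial-time construction. Your version is somewhat more explicit (you show directly that the greedy residual vanishes rather than arguing existence by contradiction, and you verify each constraint of (P2) individually), but the underlying idea is identical.
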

\begin{proof}
	To equalize the two objectives in \eqref{eq:obj 2} and \eqref{eq:obj 3}, it requires that \eqref{eq:two obj equal} holds. Moreover, any $\left\{a^t_{n,s}|s \in \mathcal{S}_n \right\}$ that satisfies \eqref{eq:obj 3a} and \eqref{eq:obj 3e} can promise $\left\{a^t_{n,v}|v \in \mathcal{V}_s, s \in \mathcal{S}_n \right\}$ satisfying \eqref{eq:two obj equal} satisfy \eqref{eq:obj 2a} and \eqref{eq:obj 2e}.
	\begin{equation}
		\sum_{v \in \mathcal{V}_s}{a^t_{n,v}} = a^t_{n,s}, \forall s\in \mathcal{S}_n, \label{eq:two obj equal}
	\end{equation}

	The remaining task is to demonstrate the existence of an assignment that assigns $a^t_{n,s}$ to $\left\{a^t_{n,v}|v \in \mathcal{V}_s \right\}$ for all $s \in \mathcal{S}_n$ under constraints formulated in \eqref{eq:obj 2b}-\eqref{eq:obj 2d} and \eqref{eq:two obj equal}.
	
	For a feasible $\left\{a^t_{n,s}|s \in \mathcal{S}_n \right\}$ satisfying \eqref{eq:obj 3b}-\eqref{eq:obj 3d}, it is easy to check that there exists a collection of $a^t_{n,v}$ that satisfies \eqref{eq:obj 2c}-\eqref{eq:obj 2d} and \eqref{eq:two obj equal} simultaneously. This depends on whether there is still a feasible collection of $a^t_{n,v}$ with the additional involvement of \eqref{eq:obj 2b}. Assuming that there is no such collection, we can infer that the solutions of \eqref{eq:obj 2c}-\eqref{eq:obj 2d} and \eqref{eq:two obj equal} in terms of $\left\{a^t_{n,v}|v \in \mathcal{V}_s, s \in \mathcal{S}_n \right\}$ must be or can be turned into a solution where at least one $a^t_{n,v}$ exceeds $l^t_{v,n}$ while others reach their upper bound in \eqref{eq:obj 2b}. This leads to a violation of \eqref{eq:obj 3b} in terms of $\left\{a^t_{n,s}|s \in \mathcal{S}_n \right\}$, which contradicts with the prerequisite that $\left\{a^t_{n,s}|s \in \mathcal{S}_n \right\}$ is feasible.
	
	Therefore, there must have a collection of $a^t_{n,v}$ that satisfies \eqref{eq:obj 2b}-\eqref{eq:obj 2d} and \eqref{eq:two obj equal}. That is to say, each feasible solution of (P3) can be converted to a feasible solution of (P2) without compromising the objective value. Moreover, the conversion can be done in polynomial time. For example, there is at least one efficient way, just increasing $a^t_{n,v}$ from 0 until it reaches $l^t_{v,n}$ or \eqref{eq:two obj equal} becomes tight.
\end{proof}

\begin{corollary} \label{corollary:2}
	(P3) is NP-Hard
\end{corollary}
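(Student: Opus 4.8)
The plan is to establish NP-hardness of (P3) by exhibiting a polynomial-time reduction from a known NP-hard problem \emph{to} (P3), mirroring the argument of Theorem~\ref{theorem:1}. The quickest route is to observe that the very reduction from MKP used to prove Theorem~\ref{theorem:1} already lands inside (P3): there each slice~$s$ is associated with a single VSP and each $l^t_{v,n}$ is set to $1$, so $\sum_{v \in \mathcal{V}_s}{a^t_{n,v}}$ collapses to a single decision variable, which is exactly the variable $a^t_{n,s}$ of (P3). First I would spell out this specialization --- map the right-hand side of \eqref{eq:obj 3a} (i.e. $\boldsymbol{C}_n - \sum_{s}\boldsymbol{c}_{n,s}A^t_{n,s}$) to $\boldsymbol{W}$, $\left\{\boldsymbol{c}_{n,s}\right\}$ to $\left\{\boldsymbol{w}_i\right\}$, $\left\{p_{n,s}\right\}$ to $\left\{v_i\right\}$, and $\left\{a^t_{n,s}\right\}$ to $\left\{x_i\right\}$, with \eqref{eq:obj 3b}--\eqref{eq:obj 3d} enforcing $x_i \in \{0,1\}$ once each sum $\sum_{v \in \mathcal{V}_s}{l^t_{v,n}}$ equals $1$ --- and then drop the soft constraint \eqref{eq:obj 3e}, which only makes the instance easier, exactly as in Theorem~\ref{theorem:1}. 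Since MKP is NP-hard, so is (P3).

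An alternative, perhaps more in the spirit of the preceding Lemma, is to argue that (P2) and (P3) are polynomial-time equivalent. Lemma~\ref{lemma:1} already supplies one half: any feasible (in particular optimal) solution of (P3) yields, in polynomial time, a feasible solution of (P2) with the same objective, so $\mathrm{OPT}(\mathrm{P3}) \le \mathrm{OPT}(\mathrm{P2})$. For the other half I would use the trivial aggregation map $a^t_{n,s} := \sum_{v \in \mathcal{V}_s}{a^t_{n,v}}$ and check that it sends any feasible point of (P2) to a feasible point of (P3) with the same objective value: \eqref{eq:obj 3a} follows from \eqref{eq:obj 2a}, \eqref{eq:obj 3b} from summing \eqref{eq:obj 2b} over $v \in \mathcal{V}_s$, \eqref{eq:obj 3c}--\eqref{eq:obj 3d} are immediate, and \eqref{eq:obj 3e} is literally \eqref{eq:obj 2e}. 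Hence $\mathrm{OPT}(\mathrm{P2}) \le \mathrm{OPT}(\mathrm{P3})$, the optima coincide, and an optimal solution of one is computable in polynomial time from an optimal solution of the other; thus (P3) is no easier than (P2), which is NP-hard by Theorem~\ref{theorem:1}.

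The only real subtlety --- and the thing I would be most careful about --- is the \emph{direction} of the reduction: NP-hardness of (P3) requires reducing a hard problem to (P3), whereas Lemma~\ref{lemma:1} on its own only compresses (P3) into (P2) and so would merely bound the difficulty of (P3) from above. The burden is therefore to supply the missing converse, either implicitly (approach one, by noting that the MKP instance constructed in Theorem~\ref{theorem:1} is already a (P3) instance) or explicitly (approach two, via the aggregation map and the $\mathrm{OPT}(\mathrm{P2}) \le \mathrm{OPT}(\mathrm{P3})$ inequality). I expect approach one to give the cleaner write-up, since it reuses machinery already in place and sidesteps any discussion of recovering (P2)-solutions from (P3)-solutions, while approach two better exposes the structural equivalence of the two models.
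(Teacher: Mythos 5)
Your proposal is correct, and it is in fact more careful than the paper's own argument. The paper proves Corollary~\ref{corollary:2} by contradiction: if (P3) were solvable in polynomial time, then by Lemma~\ref{lemma:1} an optimal solution of (P3) could be turned into an \emph{optimal} solution of (P2) in polynomial time, contradicting Theorem~\ref{theorem:1}. That is essentially your second approach, except the paper silently assumes the mapped solution is optimal for (P2), whereas Lemma~\ref{lemma:1} only guarantees a \emph{feasible} solution of (P2) with equal objective value, i.e., $\mathrm{OPT}(\mathrm{P3}) \le \mathrm{OPT}(\mathrm{P2})$. The converse inequality, which you supply explicitly via the aggregation map $a^t_{n,s} := \sum_{v \in \mathcal{V}_s} a^t_{n,v}$, is exactly what legitimizes the optimality transfer, so your write-up closes the directional gap you correctly identified. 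Your first approach --- noting that the MKP instance constructed in Theorem~\ref{theorem:1}, with each $l^t_{v,n}$ set to $1$ so that $\sum_{v \in \mathcal{V}_s} a^t_{n,v}$ collapses to the single variable $a^t_{n,s}$, is already a (P3) instance --- is a genuinely different and arguably cleaner route: it is a direct reduction of MKP into (P3) and needs neither Lemma~\ref{lemma:1} nor any discussion of recovering (P2)-solutions. Either version is acceptable; the direct reduction is more self-contained, while the equivalence argument mirrors the paper's structure once the missing $\mathrm{OPT}(\mathrm{P2}) \le \mathrm{OPT}(\mathrm{P3})$ direction is made explicit.
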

\begin{proof}
	Let us assume that (P3) is not NP-Hard, meaning that it can be solved in polynomial time. Combining the conclusion of Lemma~\ref{lemma:1}, we can obtain an optimal solution of (P2) from that of (P3) in polynomial time. However, Theorem~\ref{theorem:1} asserts that (P2) is NP-Hard, which means that it cannot be solved in polynomial time. There happens a contradiction. Thus, the hypothesis does not hold, i.e., (P3) is NP-Hard.
\end{proof}

In fact, the physical significance of (P3) is to carry out slice admission control between slices. On the contrary, (P2) is specific to the quota of each VSP. In the scenario where multiple VSPs apply for the same slice, the ultimate goal of the per-slot problem is to solve (P2), while (P3) further reduces the number of variables in (P2) and can better reflect the heterogeneity of slices.

Based on these analyses, we plan to solve (P2) in two stages. First, we design a heuristic algorithm for (P3) to make inter-slice admission decisions and then map the solution of (P3) to that of (P2) subject to an extra constraint, i.e., \eqref{eq:two obj equal}, using a tailored auction mechanism.

\subsection{Inter-slice Admission Decision} \label{subsec:inter slice admission decision}
We propose a heuristic algorithm called Dominant Resource Revenue Efficiency Driven Prioritized Admission (DRREDPA) to solve (P3). The idea of DRREDPA is to spend the same amount of dominant resources to earn as much revenue as possible while trying to maintain the priority differences of slices.

Before describing the algorithm, it is necessary to explain what a dominant resource is and how to calculate the revenue efficiency of a dominant resource. A dominant resource is a resource type that is related to a slice resource demand vector and the available resource vector of an NSP. Supposing that the available resource vector is denoted as $\boldsymbol{C}^*_n$, then the dominant resource of slice~$s$ is the one exhausted first when slice~$s$ monopolizes $\boldsymbol{C}^*_n$. We use $DR_{s|\boldsymbol{C}^*_n}$ to denote the dominant resource of slice~$s$ given $\boldsymbol{C}^*_n$, \eqref{eq:dr} shows the definition. Dominant Resource Revenue Efficiency (DRRE), denoted as $RE_{s|\boldsymbol{C}^*_n}$, refers to the amount of revenue obtained by spending unit dominant resource for slice~$s$ given $\boldsymbol{C}^*_n$, which is formulated in \eqref{eq:re}.
\begin{equation} \label{eq:dr}
	DR_{s|\boldsymbol{C}^*_n} \triangleq \underset{k \in [1,K]}{\text{argmin}} \frac{C^{*,k}_n}{c^k_{n,s}},
\end{equation}
\begin{equation} \label{eq:re}
	RE_{s|\boldsymbol{C}^*_n} = \left. \frac{p_{n,s}}{c^{k^*}_{n,s}} \right|_{k^* = DR_{s|\boldsymbol{C}^*_n}},
\end{equation}

DRREDPA takes the number of active slices, slice demand, and resource capacity as inputs, and outputs the access amount of each slice, \algref{alg:DRREDPA} shows the details. In each time slot~$t$, the inter-slice admission decisions $a^t_{n,s}$ are initiated to zero, DRREDPA first calculates the available resource $\boldsymbol{C}^*_n$ with \eqref{eq:ava} when there are already $A^t_{n,s}$ active slice instances and temporarily $a^t_{n,s}$ to be admitted for $s \in \mathcal{S}_n$.
\begin{equation} \label{eq:ava}
	\boldsymbol{C}^*_n = \boldsymbol{C}_n - \sum_{s \in \mathcal{S}_n}{\boldsymbol{c}_{n,s} \cdot \left(A^t_{n,s}+a^t_{n,s}\right)},
\end{equation}

Afterward, \algref{alg:DRREDPA} checks whether the priority condition in \eqref{eq:obj 3e} is satisfied. If so, it sorts all slices in $\mathcal{S}_n$ in descending order of their DRRE, then grants one extra quota to the first slice whose resource demand does not exceed the available resource and temporary admission quantity is less than the requested, at the premise that the priority condition still holds after doing so. When there have no such slices, DRREDPA returns the current $\left\{a^t_{n,s}|s \in \mathcal{S}_n \right\}$ as the final inter-slice decisions. Otherwise, it goes back to recalculate the available resources $\boldsymbol{C}^*_n$ with updated $\left\{a^t_{n,s}|s \in \mathcal{S}_n \right\}$, and repeats the steps since then.

\begin{algorithm}[!tb]
	\caption{DRREDPA} \label{alg:DRREDPA}
	\SetKwInOut{Input}{Input}
	\SetKwInOut{Output}{Output}
	\Input{1) Active slices $\left\{A^t_{n,s}|s \in \mathcal{S}_n \right\}$;\\
		   2) Slice demands $\left\{l^t_{v,n}|v \in \mathcal{V}_s, s \in \mathcal{S}_n \right\}$;\\
		   3) Resource capacity $\boldsymbol{C}_n$.}
	\Output{Inter-slice decisions $\left\{a^t_{n,s}|s \in \mathcal{S}_n \right\}$.}
	Initiate $\left\{a^t_{n,s}|s \in \mathcal{S}_n \right\}$ to zero, set $flag$ to $True$.
	
	\While{$flag$ is $True$}{
		Set $flag$ to $False$.
		
		Calculate the available resource $\boldsymbol{C}^*_n$ given $\left\{A^t_{n,s}|s \in \mathcal{S}_n \right\}$, $\left\{a^t_{n,s}|s \in \mathcal{S}_n \right\}$ and $\boldsymbol{C}_n$.
		
		\eIf{the priority condition is satisfied given $\left\{a^t_{n,s}|s \in \mathcal{S}_n \right\}$}{
			Calculate $RE_{s|\boldsymbol{C}^*_n}$ for each $s$ in $\mathcal{S}_n$. \label{line:DRREDPA compute RE for Sn}
			
			Sort $s$ in $\mathcal{S}_n$ in descending order of $RE_{s|\boldsymbol{C}^*_n}$.
			
			\For{each $s$ in $\mathcal{S}_n$ \label{line:DRREDPA traverse Sn}}{
				\If{$\boldsymbol{c}_{n,s} \leq \boldsymbol{C}^*_n, a^t_{n,s} < \sum_{v \in \mathcal{V}_s}{l^t_{v,n}}$ \rm{\textbf{and}} the priority condition still holds when $a^t_{n,s}$ increases 1 \label{line:DRREDPA check resource priority}}{
					$a^t_{n,s} \leftarrow a^t_{n,s} + 1$.
					
					Set $flag$ to $True$, \textbf{break}. \label{line:DRREDPA traverse Sn break}
				}
			}
		}{
			Form a new set $\mathcal{S}^*_n$ with the slices in $\mathcal{S}_n$ that violate the priority condition.
			
			Calculate $RE_{s|\boldsymbol{C}^*_n}$ for each $s$ in $\mathcal{S}^*_n$. \label{line:DRREDPA compute RE for Sn star}
			
			Sort $s$ in $\mathcal{S}^*_n$ in descending order of $RE_{s|\boldsymbol{C}^*_n}$.
			
			\For{each $s$ in $\mathcal{S}^*_n$}{
				\If{$\boldsymbol{c}_{n,s} \leq \boldsymbol{C}^*_n, a^t_{n,s} < \sum_{v \in \mathcal{V}_s}{l^t_{v,n}}$}{
					$a^t_{n,s} \leftarrow a^t_{n,s} + 1$.
					
					Set $flag$ to $True$, \textbf{break}. \label{line:DRREDPA traverse Sn star break}
				}
			}
		}
	}
	\textbf{return} $\left\{a^t_{n,s}|s \in \mathcal{S}_n \right\}$.
\end{algorithm}

When \algref{alg:DRREDPA} captures priority violations above, it filters out slices that break the priority condition into set $\mathcal{S}^*_n$, calculates their DRRE, and sorts them in descending order. Then, DRREDPA additionally admits the first slice with residual requests and resource demands that do not exceed the available resources. Similarly, if there does exist such a slice, DRREDPA goes back to recalculate the available resources $\boldsymbol{C}^*_n$ with updated $\left\{a^t_{n,s}|s \in \mathcal{S}_n \right\}$, and repeats the steps since then. Otherwise, it returns the current $\left\{a^t_{n,s}|s \in \mathcal{S}_n \right\}$ as the final inter-slice decisions.

\subsection{Intra-slice Quota Auction} \label{subsec:intra-slice quota auction}
DRREDPA introduced in Section~\ref{subsec:inter slice admission decision} produces a feasible solution $\left\{a^t_{n,s}|s \in \mathcal{S}_n\right\}$ of (P3), NSP~$n$ needs an allocation rule to further distribute quotas $a^t_{n,s}$ to alternative VSPs in $\mathcal{V}_s$ for slices in $\mathcal{S}_n$, so as to get a solution of (P2) with equal objective value as (P3). There may be multiple allocation schemes that can make it, but our expectation is to pursue the one that can balance the allocation of quotas among alternative VSPs within each slice.

Except for \eqref{eq:obj 2b}-\eqref{eq:obj 2d} and \eqref{eq:two obj equal}, we put another requirement on the allocation rule; it should produce a result that achieves the Value-weighted Proportional Fairness (VWPF) among alternative VSPs, whose valuation $b^*_v$ is bigger than the base price $p_{n,s}$. In other words, if $\left\{\hat{a}^t_{n,v}|v \in \mathcal{V}_s\right\}$ is an allocation result of NSP~$n$ that satisfies VWPF within slice~$s$, then for any other feasible solution $\left\{a^t_{n,v}|v \in \mathcal{V }_s\right\}$, \eqref{eq:VWPF} holds:
\begin{equation} \label{eq:VWPF}
	\sum_{v \in \mathcal{V}_s}{\mathbb{I}_{p_{n,s} \leq b^*_v} \cdot b^*_v \cdot \frac{a^t_{n,v} - \hat{a}^t_{n,v}}{\hat{a}^t_{n,v}+\epsilon}} \leq 0,
\end{equation}

\noindent The valuations $\left\{b^*_v|v \in \mathcal{V}_s \right\}$ work as weights in \eqref{eq:VWPF} and $\mathbb{I}_{p_{n,s} \leq b^*_v}$ is an indicator function that takes 1 when the condition is true or 0 otherwise, $\epsilon$ is a positive parameter to avoid zero denominators.

The literature about proportional fairness ensures that the solution of \eqref{eq:VWPF} must maximize the tailored utility function formulated in \eqref{eq:obj 4}. Therefore, the ideal allocation rule must solve the optimization problem described in (P4) for any $s \in \mathcal{S}_n$. Note that \eqref{eq:obj 4a} is introduced to ensure that NSP~$n$ only serves VSPs with bigger valuations than $p_{n,s}$. It works as an auxiliary constraint to make \eqref{eq:obj 4} equivalent to \eqref{eq:VWPF}. \eqref{eq:obj 4b}-\eqref{eq:obj 4e} are directly transferred from \eqref{eq:obj 2b}-\eqref{eq:obj 2d} and \eqref{eq:two obj equal}.
\begin{align}
	\text{(P4): } \underset{a^t_{n,v}}{\max} \sum_{v \in \mathcal{V}_s}{b^*_v \cdot \text{log}\left(a^t_{n,v} + \epsilon \right)} \label{eq:obj 4} \\
	\text{s.t. } a^t_{n,v} \cdot p_{n,s} \leq a^t_{n,v} \cdot b^*_v, \forall v \in \mathcal{V}_s, \tag{\ref{eq:obj 4}a} \label{eq:obj 4a} \\
	a^t_{n,v} \leq l^t_{v,n}, \forall v \in \mathcal{V}_s, \tag{\ref{eq:obj 4}b} \label{eq:obj 4b} \\
	a^t_{n,v} \geq 0, \forall v \in \mathcal{V}_s, \tag{\ref{eq:obj 4}c} \label{eq:obj 4c} \\
	a^t_{n,v} \in \mathbb{Z}, \forall v \in \mathcal{V}_s, \tag{\ref{eq:obj 4}d} \label{eq:obj 4d} \\
	\sum_{v \in \mathcal{V}_s}{a^t_{n,v}} = a^t_{n,s}, \tag{\ref{eq:obj 4}e} \label{eq:obj 4e}
\end{align}

Let us suppose that $\left\{b^*_v|v \in \mathcal{V}_n \right\}$ is honestly revealed to NSP~$n$, we propose an allocation rule called Larger Increment First~(LIF) to find the optimal assignment of (P4). LIF filters out VSPs in $\mathcal{V}_s$ whose valuation is lower than $p_{n,s}$ in advance and constructs a set of increments for each VSP remained, the increments correspond to the utility steps yielded to \eqref{eq:obj 4} when raising $a^t_{n,v}$ from zero to $l^t_{v,n}$ with step length 1. Then, LIF sorts these increments in descending order, only the first $a^t_{n,s}$ increments are preserved. LIF finally allocates one quota to each selected increment, and the VSP to which the increment belongs will get the quota attached.

\begin{lemma} \label{lemma:2}
	LIF gives the optimal solution of (P4).
\end{lemma}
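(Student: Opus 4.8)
The plan is to prove Lemma~\ref{lemma:2} by exploiting the concavity/diminishing-returns structure of the objective \eqref{eq:obj 4} and showing that a greedy allocation which always picks the currently largest available utility increment is optimal. First I would set up notation: for each eligible VSP $v$ (i.e. with $p_{n,s} \leq b^*_v$, so that \eqref{eq:obj 4a} is inactive and can be dropped), define the marginal increment of assigning the $j$-th unit to $v$ as $\delta_{v,j} = b^*_v\left[\log(j+\epsilon) - \log(j-1+\epsilon)\right]$ for $j = 1,\dots,l^t_{v,n}$. The key structural observation is that for fixed $v$ the sequence $\delta_{v,1} > \delta_{v,2} > \cdots > \delta_{v,l^t_{v,n}}$ is strictly decreasing, because $\log$ is strictly concave. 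Hence any feasible integer allocation $\{a^t_{n,v}\}$ summing to $a^t_{n,s}$ has objective value (up to the additive constant $\sum_v b^*_v\log(0+\epsilon)$, i.e.\ starting all VSPs at $0$) exactly equal to the sum of the first $a^t_{n,v}$ increments of each $v$, and — crucially — a prefix of a decreasing sequence is the \emph{maximum-weight} sub-collection of that sequence of its size. This lets me recast (P4) as: choose $a^t_{n,s}$ increments from the pooled multiset $\{\delta_{v,j}\}$ so that for each $v$ the chosen indices form a prefix $\{1,\dots,a^t_{n,v}\}$, maximizing the total.

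The core of the argument is an exchange/greedy-matroid step. I would argue that the prefix-feasibility constraint is automatically respected by the global ``pick the $a^t_{n,s}$ largest increments'' rule: if increment $\delta_{v,j}$ is selected but $\delta_{v,j-1}$ is not, then since $\delta_{v,j-1} > \delta_{v,j}$, swapping in $\delta_{v,j-1}$ for $\delta_{v,j}$ (or for any smaller selected increment of another VSP) strictly increases — or at worst preserves, handling ties — the total while moving toward prefix-feasibility; iterating terminates at a prefix-feasible selection that is still optimal, and that selection is exactly what LIF produces. Equivalently I can phrase it as a direct exchange argument against an arbitrary optimal solution $\{a^{*t}_{n,v}\}$: if LIF and the optimum differ, there is a VSP $v$ with $a^{\text{LIF}}_{n,v} > a^{*}_{n,v}$ and a VSP $w$ with $a^{\text{LIF}}_{n,w} < a^{*}_{n,w}$; because LIF stopped adding to $w$ while it would still have added to $v$, the increment $\delta_{v, a^{*}_{n,v}+1}$ that LIF took is $\geq \delta_{w, a^{*}_{n,w}}$ that the optimum took, so shifting one unit from $w$ to $v$ in the optimum does not decrease the objective; repeating drives the optimum to the LIF allocation. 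I would also note the feasibility side conditions are trivially met: LIF never assigns $v$ more than $l^t_{v,n}$ units (its increment list for $v$ has only $l^t_{v,n}$ entries), never assigns a negative amount, always assigns integers, and by construction distributes exactly $a^t_{n,s}$ units in total, so \eqref{eq:obj 4b}--\eqref{eq:obj 4e} all hold; and since LIF pre-filters VSPs with $b^*_v < p_{n,s}$, constraint \eqref{eq:obj 4a} holds as well and dropping it loses nothing because such a VSP contributing a positive $a^t_{n,v}$ would violate \eqref{eq:obj 4a}.

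One subtlety I would be careful about is \textbf{feasibility of (P4) itself}: LIF silently assumes $\sum_{v \in \mathcal{V}_s}\mathbb{I}_{p_{n,s}\leq b^*_v}\, l^t_{v,n} \geq a^t_{n,s}$, i.e.\ that there are enough increments among eligible VSPs to absorb all $a^t_{n,s}$ quotas granted by DRREDPA; I would either invoke this as a standing assumption on the model (the auction stage only distributes quotas that eligible tenants actually requested, or DRREDPA is run with effective demand restricted to eligible VSPs) or add a remark that otherwise (P4) has an empty feasible set and the claim is vacuous. The other point worth a sentence is the treatment of \textbf{ties} among increments from different VSPs: when several increments share the same value, any tie-breaking in LIF's sort yields the same optimal objective value, so ``the optimal solution'' should be read as ``an optimal solution'' (uniqueness of the objective, not of the allocation).

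The step I expect to be the main obstacle — really the only non-mechanical part — is making the exchange argument airtight in the presence of the per-VSP upper bounds $l^t_{v,n}$ and of ties, i.e.\ verifying that the ``shift one unit from $w$ to $v$'' operation stays feasible (in particular does not push $a_{n,v}$ above $l^t_{v,n}$) and strictly monotone enough to terminate. This is where I would spend the care: the right framing is that the family of prefix-feasible index-sets forms (the independent sets of) a partition-style matroid truncated to size $a^t_{n,s}$, on which the greedy algorithm is provably optimal for a linear objective (the increment weights); citing or lightly re-deriving that fact closes the gap cleanly and sidesteps hand-crafted case analysis.
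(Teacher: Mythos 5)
Your proof is correct and follows essentially the same route as the paper's, which simply asserts that after sorting all increments, assigning each quota to the larger one first yields the largest utility among all feasible allocations. The detail you add --- that strict concavity of $\log$ makes each VSP's increment sequence decreasing, so the global top-$a^t_{n,s}$ selection is automatically prefix-feasible, together with the exchange/matroid argument and the feasibility caveat --- is exactly the rigor the paper's two-line justification leaves implicit.
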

\begin{proof}
	We can easily verify that the operation of constructing and selecting increments strictly guarantees \eqref{eq:obj 4a}-\eqref{eq:obj 4e}, under the assumption that VSPs who eager for slice~$s$ are able and willing to pay the base price $p_{n,s}$. After all the increments are sorted, each quota is assigned to the larger one first, so the final utility is the largest among all feasible allocation results.
\end{proof}

Nevertheless, the values $\left\{b^*_v | v \in \mathcal{V}_s \right\}$ are private, and NSP~$n$ needs additional tricks to obtain this information. Fortunately, an auction mechanism with the property of Dominant Strategy Incentive Compatible (DSIC) can achieve it. Therefore, we manage to devise such an auction mechanism to implement the intra-slice quota allocation. The parities involved are defined as follows:
\begin{itemize}
	\item \textbf{Auctioneer}. NSP~$n$ is the auctioneer. Each supported slice~$s$ corresponds to an auction in each time slot~$t$. The items sold are the admission quotas $a^t_{n,s}$ of the slice and the social welfare is given by \eqref{eq:obj 4}.
	
	\item \textbf{Bidder}. VSPs that eager for slice~$s$ are bidders. Their valuations $\left\{b^*_v | v \in \mathcal{V}_s \right\}$ are the incomes that they can earn per time slot after obtaining a quota. Although we assume that AVSPs sell the service at the same price, their valuations may differ due to differences in Operating Expenditure.
\end{itemize}

An auction mechanism includes an allocation rule, which we have already proposed (i.e., LIF), and a charging rule. It is the charging rule that forces VSPs to bid truthfully. The \textit{Myerson's Lemma} \cite{roughgarden2016twenty} says that as long as the allocation rule is monotonically non-decreasing to the bid, there must be a unique charging rule that makes the auction mechanism DSIC and it is quite elegant; just charge the prices at which the allocation curve steps.

\begin{lemma} \label{lemma:3}
	LIF is monotonically non-decreasing to the bid. In other words, if one VSP raises its bid while others stay the same, it gets at least as many quotas as with a smaller bid.
\end{lemma}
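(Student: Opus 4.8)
The plan is to argue directly from the mechanics of LIF. After the pre-filtering step removes every VSP whose bid is below $p_{n,s}$, the quota that a VSP $v_0$ finally receives is exactly the number of $v_0$'s own increments that end up among the globally top $a^t_{n,s}$ increments. The first thing I would record is a monotone-prefix property: the $j$-th increment of any VSP $v$ equals $b_v\bigl(\log(j+\epsilon)-\log(j-1+\epsilon)\bigr)$, and since $\log$ is strictly concave this is strictly decreasing in $j$; hence the increments LIF selects from any single VSP always form a prefix of that VSP's list. So it suffices to track how the length of $v_0$'s selected prefix changes when its bid rises from $b$ to $b'\ge b$ with all other bids held fixed.

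Next I would dispose of the cases created by the filtering step. If $b'<p_{n,s}$ then $b<p_{n,s}$ as well and $v_0$ gets $0$ quotas under both bids; if $b<p_{n,s}\le b'$ then $v_0$ gets $0$ under $b$, so whatever it gets under $b'$ is already at least as large. The only substantive case is $p_{n,s}\le b<b'$, where $v_0$ participates in both auctions, and there the key observation is that raising $v_0$'s bid rescales \emph{every} one of $v_0$'s increments by the common factor $b'/b>1$ while leaving all other VSPs' increments untouched.

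For that case I would show that the global rank of each of $v_0$'s increments can only improve. The $j$-th increment of $v_0$ has exactly $j-1$ of $v_0$'s own increments strictly above it, independently of the bid; meanwhile the number of increments of other VSPs lying strictly above it can only decrease when that increment grows. Hence the global rank of $v_0$'s $j$-th increment under $b'$ is at most its global rank under $b$. Now let $k$ be the prefix length $v_0$ obtains under $b$; its $k$-th increment is selected, so its global rank under $b$ is at most $a^t_{n,s}$, and for $j\le k$ the $j$-th increment under $b$ has rank at most that of the $k$-th (strict-decrease property). Chaining the two inequalities, under $b'$ each of $v_0$'s first $k$ increments has global rank at most $a^t_{n,s}$ and is therefore still selected, so $v_0$ obtains at least $k$ quotas, which is the claimed monotonicity.

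I expect the main nuisance to be ties at the selection cutoff: the "top $a^t_{n,s}$ increments" set is only well defined once a tie-breaking order is fixed. I would fix any consistent tie-breaking rule in advance and observe that it causes no trouble — a strictly larger bid strictly enlarges each of $v_0$'s increments, so any tie between a $v_0$-increment and another VSP's increment is resolved strictly in $v_0$'s favour, and ties among $v_0$'s own increments never arise because that sequence is strictly decreasing. A secondary point worth stating explicitly is that $a^t_{n,s}$ is an input handed down by the inter-slice stage (DRREDPA) and does not depend on the within-slice bids, so the two auctions being compared share the same quota budget; this is what makes the rank comparison legitimate.
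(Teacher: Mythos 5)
Your proposal is correct and follows essentially the same route as the paper's proof: raising one VSP's bid scales only its own increments upward, leaves all competing increments unchanged, and hence can only improve the global ranks of its increments, so no previously won quota is lost. The paper states this in two sentences; you supply the supporting details (the strictly decreasing prefix structure of each VSP's increment list, the filtering cases around the reserve price, and tie-breaking at the cutoff), which strengthens but does not change the argument.
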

\begin{proof}
	Raising the bid of one VSP will only promote its increments without affecting those of others. The rankings of the new increments are only likely to move forward, so the VSP will not experience a loss of quotas.
\end{proof}

\begin{corollary} \label{corollary:unique charging rule}
	There exists a unique charging rule that makes LIF DSIC.
\end{corollary}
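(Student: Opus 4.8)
The plan is to invoke \emph{Myerson's Lemma} \cite{roughgarden2016twenty} directly, since the hard work of verifying its hypotheses has essentially been done already in Lemma~\ref{lemma:3}. Recall that Myerson's Lemma, specialized to a single-parameter environment, states that an allocation rule $x(\cdot)$ is implementable (i.e., can be extended to a DSIC mechanism) \emph{if and only if} it is monotone non-decreasing in each agent's bid, and moreover, whenever it is monotone, there is a \emph{unique} payment rule (up to the normalization that a zero bid yields zero payment) that renders the mechanism DSIC; this payment rule is given explicitly by the ``area to the left of the allocation curve'' formula, $p_v(b_v) = \sum_j z_j \cdot \big(\text{jump in } x_v \text{ at } z_j\big)$, where the $z_j$ are the breakpoints of the piecewise-constant curve $b_v \mapsto x_v(b_v, \boldsymbol{b}_{-v})$.

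First I would state that LIF is a single-parameter allocation rule: each VSP~$v \in \mathcal{V}_s$ has a one-dimensional private type $b^*_v$, reports a bid $b_v$, and LIF returns an integer quota $a^t_{n,v} \in \{0,1,\dots,l^t_{v,n}\}$ as a function of the full bid profile. Second I would note that the feasibility bookkeeping required by Myerson (the allocation must be a valid outcome for every bid profile) is automatic here, because LIF by construction respects \eqref{eq:obj 4b}--\eqref{eq:obj 4e} for any input, as already observed in the proof of Lemma~\ref{lemma:2}. Third, and this is the crux, I would cite Lemma~\ref{lemma:3}, which establishes exactly the monotonicity hypothesis of Myerson's Lemma: raising $b_v$ while holding $\boldsymbol{b}_{-v}$ fixed never decreases $v$'s awarded quota. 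With monotonicity in hand, the ``if'' direction of Myerson's Lemma yields the \emph{existence} of a DSIC payment rule, and the uniqueness clause of the same lemma yields that this payment rule is the \emph{only} one (subject to the standard normalization $p_v = 0$ when $b_v = 0$) making LIF DSIC. Concatenating these gives exactly the statement of Corollary~\ref{corollary:unique charging rule}.

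I would then make the uniqueness concrete by describing what the charging rule actually is, so the corollary is not merely an abstract existence claim: for a fixed profile $\boldsymbol{b}_{-v}$, the map $b_v \mapsto a^t_{n,v}$ is a non-decreasing integer step function with at most $l^t_{v,n}$ unit jumps, occurring at thresholds $z_1 \le z_2 \le \cdots$; Myerson's payment is $p_v = \sum_{k} z_k$ summed over the jumps that actually occur at $b_v$ (equivalently, the bid $v$ would have to undercut to lose each successive quota it currently holds). Operationally, these thresholds are precisely the ``competing increments'' that $v$'s increments displace in the sorted list maintained by LIF --- the $(a^t_{n,s}{+}1)$-th largest increment among all VSPs other than $v$, then the $a^t_{n,s}$-th largest, and so on --- which ties the charging rule cleanly back to the mechanics of the algorithm.

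The main obstacle, such as it is, is not mathematical depth but rather making sure the single-parameter framing is airtight: one must check that LIF's output depends on $v$'s report only through the scalar $b_v$ (it does --- $b_v$ enters only via the magnitudes of $v$'s own increments $b_v\big(\log(i{+}\epsilon) - \log(i{-}1{+}\epsilon)\big)$), and that ties in the sorted increment list are broken by a fixed, bid-independent rule so that the allocation is a well-defined deterministic function of the profile and the monotonicity of Lemma~\ref{lemma:3} holds without edge-case ambiguity. Once that hygiene is dispatched, the corollary is an immediate consequence of Lemma~\ref{lemma:3} together with the quoted form of Myerson's Lemma, so the proof is short.
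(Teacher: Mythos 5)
Your proposal is correct and follows essentially the same route as the paper: cite Lemma~\ref{lemma:3} for monotonicity of LIF and then apply \emph{Myerson's Lemma} to obtain both the existence and the uniqueness of the DSIC charging rule. The extra material you add --- the explicit threshold/``area under the allocation curve'' description of the payments and the tie-breaking hygiene --- is not in the paper's one-line proof of the corollary, but it anticipates exactly what the paper does next in \propref{proposition:CPRP}, so nothing is missing or superfluous.
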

\begin{proof}
	Combining \textit{Myerson's Lemma} and Lemma~\ref{lemma:3}, we can infer that LIF meets the requirements of \textit{Myerson's Lemma} for the allocation rule. Therefore, there must be a unique charging rule, which together with LIF constitutes an auction mechanism with the DSIC property.
\end{proof}

With \corolref{corollary:unique charging rule}, we next elaborate on this unique charging rule in \propref{proposition:CPRP}, which is named Critical Pricing with Reserved Prices (CPRP).

\begin{proposition} \label{proposition:CPRP}
	(CPRP) Taking VSP~$v$ belonging to $\mathcal{V}_s$ as an example. The charging rule that makes LIF DSIC just prices the quota attached to the $i$-th smallest preserved increment of VSP~$v$ according to \eqref{eq:critical price}:
	\begin{equation} \label{eq:critical price}
		p^{-i}_{n,v} = \max \left\{b_v \cdot \frac{\delta^{i}_{\mathcal{V}_s \backslash v}}{\Delta^{-i}_{n,v}},\; p_{n,s}\right\},\, i \in \left\{1, \dots, a^t_{n,v} \right\},
	\end{equation}
	
	\noindent where $b_v$ is the valuation reported by VSP~$v$, $\Delta^{-i}_{n,v}$ is the $i$-th smallest increment of VSP~$v$ that is preserved by LIF. $\delta^i_{\mathcal{V}_s \backslash v}$ is the $i$-th highest loser of increments belonging to $\mathcal{V}_s \backslash v$, if there is no such loser, it takes 0.
\end{proposition}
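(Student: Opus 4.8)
The plan is to derive CPRP directly from the threshold (critical-bid) characterization of Myerson payments. By \corolref{corollary:unique charging rule} we already know that a unique DSIC charging rule for LIF exists; \emph{Myerson's Lemma} moreover tells us exactly what it is --- for a monotone single-parameter allocation rule, a bidder's payment equals the sum, over the ``steps'' of its own allocation curve, of the bid values at which those steps occur. Under LIF the number of quotas $x_v(b_v)$ that VSP~$v$ wins, with the reports of the other VSPs held fixed, is a non-decreasing step function of its report $b_v$ (this is Lemma~\ref{lemma:3}), so it suffices to identify, for each quota VSP~$v$ receives, the threshold report at which that quota is first won, and to check that this threshold equals the price given in \eqref{eq:critical price}.

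First I would make the increment structure explicit. When VSP~$v$ reports $b_v$, its $j$-th increment --- the marginal utility of moving $a^t_{n,v}$ from $j-1$ to $j$ in \eqref{eq:obj 4} --- is $b_v\cdot g_j$ with $g_j = \log(j+\epsilon)-\log(j-1+\epsilon)$, a quantity that is strictly decreasing in $j$; this is exactly why a VSP's own increments are already sorted and why raising $b_v$ only scales them up. VSP~$v$ wins at least $k$ quotas iff its $k$-th increment $b_v g_k$ lies among the top $a^t_{n,s}$ increments globally. Counting the increments strictly above $b_v g_k$: VSP~$v$'s own increments $1,\dots,k-1$ are all larger because $g$ decreases, so the competing increments from $\mathcal{V}_s\backslash v$ exceeding $b_v g_k$ may number at most $a^t_{n,s}-k$. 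Hence the threshold report for this quota is the value at which $b_v g_k$ equals the $(a^t_{n,s}-k+1)$-th largest increment among the other VSPs.

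Next I would translate indices. When VSP~$v$ actually wins $a^t_{n,v}$ quotas, the other VSPs collectively win $a^t_{n,s}-a^t_{n,v}$ quotas, so their losing increments are ranked starting at position $a^t_{n,s}-a^t_{n,v}+1$; meanwhile the ``$i$-th smallest preserved increment'' of VSP~$v$ is its $k$-th with $k=a^t_{n,v}-i+1$, and substituting gives $a^t_{n,s}-k+1 = a^t_{n,s}-a^t_{n,v}+i$, i.e.\ the relevant competing increment is precisely the $i$-th highest loser $\delta^{i}_{\mathcal{V}_s\backslash v}$. Writing $\Delta^{-i}_{n,v}=b_v g_k$, the threshold report $b_v'$ solves $b_v' g_k = \delta^{i}_{\mathcal{V}_s\backslash v}$, i.e.\ $b_v' = \delta^{i}_{\mathcal{V}_s\backslash v}\cdot b_v/\Delta^{-i}_{n,v}$, where the $b_v$ factors cancel so the price is well defined and, as it must be, independent of $v$'s own report. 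Finally, the pre-filtering step of LIF discards any VSP with $b_v\le p_{n,s}$, which acts as a reserve price and raises every per-quota threshold to at least $p_{n,s}$; taking the maximum of the competitive threshold and the reserve yields \eqref{eq:critical price}, and also handles the degenerate case in which $\mathcal{V}_s\backslash v$ has no losing increment, where $\delta^i=0$ and the price collapses to $p_{n,s}$. Summing these per-quota thresholds and invoking uniqueness from \corolref{corollary:unique charging rule} then completes the argument.

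The main obstacle I anticipate is the bookkeeping in the threshold count: correctly accounting for VSP~$v$'s own higher increments when deciding whether its $k$-th increment clears the global top-$a^t_{n,s}$ cutoff, and keeping the three indexings mutually consistent (``$k$-th quota won'' versus ``$i$-th smallest preserved increment of $v$'' versus ``$i$-th highest loser among the others''), including a careful tie-breaking convention when several increments coincide in value. A secondary delicate point is the interaction between the competitive threshold and the reserve price $p_{n,s}$, so that the stated $\max\{\cdot,\,p_{n,s}\}$ --- together with the ``$\delta^i$ takes $0$'' convention --- is shown to be exactly the Myerson price rather than merely an upper bound on it.
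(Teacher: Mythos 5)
Your proposal is correct and follows essentially the same route as the paper's proof: both identify the per-quota Myerson thresholds as the step points of the LIF allocation curve, with the reserve price $p_{n,s}$ absorbed via the $\max$ and the $\delta^{i}_{\mathcal{V}_s \backslash v}=0$ convention. Your version is simply more explicit about the index bookkeeping and adds the useful sanity check that the $b_v$ factors cancel in \eqref{eq:critical price}, so the charged price is independent of VSP~$v$'s own report.
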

\begin{proof}
	We can easily find that $p^{-i}_{n,v}$ is never lower than the base price $p_{n,s}$. Since $\delta^{i}_{\mathcal{V}_s \backslash v}$ is less than $\Delta^{-i}_v$, $p^{-i}_{n,v}$ is guaranteed less than $b_v$. This means that the price given by \eqref{eq:critical price} will be limited to $\left[p_{n,s}, b_v \right]$. As long as the valuation of VSP~$v$ is higher than the base price, which is an aforementioned premise, this charging rule will not cause VSP~$v$ to generate negative returns.
	
	Let us concentrate on the left formula inside the curly braces of \eqref{eq:critical price}, we name it Critical Price (CP) for convenient. There may be two cases:
	\begin{enumerate}
		\item \textit{CP is bigger than $p_{n,s}$ and $p^{-i}_{n,v}$ is set to CP}. In this case, if VSP~$v$ lowers the bid to be slightly smaller than CP, then its $i$-th smallest increment preserved by LIF will be defeated by $\delta^{i}_{\mathcal{V}_s \backslash v}$. In other words, the allocation curve of VSP~$v$ produces a unit step at CP. Therefore, the quota attached to the $i$-th smallest increment is indeed priced at the step point.
		
		\item \textit{CP is smaller than $p_{n,s}$ and $p^{-i}_{n,v}$ is set to $p_{n,s}$}. Once VSP~$v$ shrinks the bid to be slightly smaller than $p_{n,s}$, it will lose the $i$-th last quota along with those preceding. That is because $p_{n,s}$ is the reserved price identified by LIF. Therefore, quotas are still priced at the step point in this case.
	\end{enumerate}
	
	In conclusion, CPRP always charges quotas at the step points of the allocation curve yielded by LIF. This is exactly what \textit{Myerson's Lemma} requires for the charging rule.
\end{proof}

We have demonstrated in \propref{proposition:CPRP} that CPRP strictly abides by the design principles of \textit{Myerson's Lemma} on charging rules. Therefore, LIF and CPRP cooperate to realize an auction mechanism with reserved price and DSIC characteristic. We name the tailored mechanism Value-weighted Proportional Fairness Auction (VWPFA) and summarize it in \algref{alg:VWPFA}.

\begin{algorithm}[!tb]
	\caption{VWPFA} \label{alg:VWPFA}
	\SetKwInOut{Input}{Input}
	\SetKwInOut{Output}{Output}
	\Input{1) Inter-slice decisions $\left\{a^t_{n,s}|s \in \mathcal{S}_n \right\}$;\\
		2) Slice demands $\left\{l^t_{v,n}|v \in \mathcal{V}_s, s \in \mathcal{S}_n \right\}$;\\
		3) bids $\left\{b_v| v \in \mathcal{V}_s, s \in \mathcal{S}_n\right\}$.}
	\Output{Intra-slice decisions $\left\{a^t_{n,v}|v \in \mathcal{V}_s, s \in \mathcal{S}_n \right\}$ and prices $\left\{ p^{-i}_{n,v}|i \in \left\{1, \dots, a^t_{n,v} \right\}, v \in \mathcal{V}_s, s \in \mathcal{S}_n \right\}$.}
	\For{each $s$ in $\mathcal{S}_n$ \label{line:VWPFA traverse Sn}}{
			\tcp{Largest Increment First (LIF)}
			Filter out VSPs whose bid $b_v$ is less than $p_{n,s}$ from $\mathcal{V}_s$. \label{line:VWPFA filter VSP}
			
			\For{each remaining $v$ in $\mathcal{V}_s$ \label{line:VWPFA traverse Vs for increments}}{
				Compute the increments of \eqref{eq:obj 4} when raising $a^t_{n,v}$ from zero to $l^t_{v,n}$ with step length 1. \label{line:VWPFA compute increments}
			}
			Sort all increments in descending order and assign one quota to each of the top $a^t_{n,s}$ ones. \label{line:VWPFA sort increments}
			
			\For{each remaining $v$ in $\mathcal{V}_s$ \label{line:VWPFA traverse Vs for allocation}}{
				$a^t_{n,v} \leftarrow$ the number of quotas attached to increments belonging to VSP~$v$. \label{line:VWPFA assign quota}
				
				\tcp{Critical Pricing with Reserved Prices (CPRP)}
				\For{each $i \in \left\{1, \dots, a^t_{n,v} \right\}$ \label{line:VWPFA price all quota for v}}{
					Find the $i$-th higest loser $\delta^{i}_{\mathcal{V}_s \backslash v}$ and the $i$-th smallest winner $\Delta^{-i}_{v}$. \label{line:VWPFA find competitor}
					
					Compute the price $p^{-i}_{n,v}$ with \eqref{eq:critical price}.
					
					
					\eIf{$p^{-i}_{n,v} = p_{n,s}$}{
							Price the unpriced quotas granted to VSP~$v$ with $p_{n,s}$, \textbf{break}. \label{line:VWPFA bulk pricing}
						}{
							Price the $i$-th last quota of VSP~$v$ with $p^{-i}_{n,v}$. \label{line:VWPFA price the last quota}
						}
				}
			}
			Assign the residual quotas (when there have) to the VSPs skipped in \lineref{line:VWPFA filter VSP} at the price $p_{n,s}$. \label{line:VWPFA penalty}
	}
	\textbf{return} \newline $\left\{a^t_{n,v}|v \in \mathcal{V}_s, s \in \mathcal{S}_n \right\}$, $\left\{ p^{-i}_{n,v}|i \in \left\{1, \dots, a^t_{n,v} \right\}, v \in \mathcal{V}_s, s \in \mathcal{S}_n \right\}$.
\end{algorithm}

It is worth mentioning that \lineref{line:VWPFA penalty} in \algref{alg:VWPFA} is to deal with the situation where some VSPs consider that the value of slice~$s$ does not worth $p_{n,s}$, but still initiate slices requests. They are deferred until other VSPs have obtained their quotas and are then allocated the possible remaining quotas at the well-known base price $p_{n,s}$ obligatorily. It acts as a supplement to prevent those malicious tenants who do not agree with the value of slice from initiating requests casually.

With the assistance of VWPFA, the only dominant strategy for VSPs is to bid honestly. In this way, NSP can prevent strategic VSPs from seeking more benefit by inflating or under-quoting, so as to achieve the value-weighted proportional fairness within the slice. Moreover, the introduction of the reserved price ensures that the real income of NSP is not lower than the base revenue defined in \eqref{eq:obj 3}. In addition, VWPFA is also compatible with slices subscribed by a single VSP. In this case, there are no competitors and no $\delta^{i}_{\mathcal{V}_s \backslash v}$ can be found in \lineref{line:VWPFA find competitor}, so all quotas will be exclusively assigned to the only VSP (in \lineref{line:VWPFA assign quota}) at the price $p_{n,s}$ (in \lineref{line:VWPFA bulk pricing}). It is exactly what competition-free slices desire.

\begin{algorithm}[!tb]
	\caption{MPSAC} \label{alg:MPSAC}
	\SetKwInOut{Input}{Input}
	\SetKwInOut{Output}{Output}
	\Input{1) Active slices $\left\{A^t_{n,s}|s \in \mathcal{S}_n \right\}$;\\
		   2) Slice demands $\left\{l^t_{v,n}|v \in \mathcal{V}_s, s \in \mathcal{S}_n \right\}$;\\
		   3) Resource capacity $\boldsymbol{C}_n$.}
	\Output{Intra-slice decisions $\left\{a^t_{n,v}|v \in \mathcal{V}_s, s \in \mathcal{S}_n \right\}$ and prices $\left\{ p^{-i}_{n,v}|i \in \left\{1, \dots, a^t_{n,v} \right\}, v \in \mathcal{V}_s, s \in \mathcal{S}_n \right\}$.}
	
	\tcp{The inter-slice decision}
	Invoke \textit{DRREDPA} in \algref{alg:DRREDPA} given $\left\{A^t_{n,s}|s \in \mathcal{S}_n \right\}$, $\left\{l^t_{v,n}|v \in \mathcal{V}_s, s \in \mathcal{S}_n \right\}$ and $\boldsymbol{C}_n$ as inputs to get $\left\{ a^t_{n,s}|s \in \mathcal{S}_s \right\}$.
	
	\tcp{The intra-slice auction}
	Let VSPs report their valuations to get $\left\{b_v| v \in \mathcal{V}_s, s \in \mathcal{S}_n\right\}$.
	
	Invoke \textit{VWPFA} in \algref{alg:VWPFA} given $\left\{ a^t_{n,s}|s \in \mathcal{S}_s \right\}$, $\left\{l^t_{v,n}|v \in \mathcal{V}_s, s \in \mathcal{S}_n \right\}$ and $\left\{b_v| v \in \mathcal{V}_s, s \in \mathcal{S}_n\right\}$ as inputs to get $\left\{a^t_{n,v}|v \in \mathcal{V}_s, s \in \mathcal{S}_n \right\}$ and $\left\{ p^{-i}_{n,v}|i \in \left\{1, \dots, a^t_{n,v} \right\}, v \in \mathcal{V}_s, s \in \mathcal{S}_n \right\}$. 
	
	\textbf{return} \newline $\left\{a^t_{n,v}|v \in \mathcal{V}_s, s \in \mathcal{S}_n \right\}$, $\left\{ p^{-i}_{n,v}|i \in \left\{1, \dots, a^t_{n,v} \right\}, v \in \mathcal{V}_s, s \in \mathcal{S}_n \right\}$.
\end{algorithm}

Combining the algorithms designed for (P3) and (P4), we propose the Multi-participant Slice Admission Control (MPSAC) algorithm depicted in \algref{alg:MPSAC}. It takes active slice information, slice requests and resource capacity of the executor (i.e., the NSP) as input, while outputs the admission quotas for each contracted VSP together with the price of each quota. MPSAC is triggered at each time slot and firstly invokes DRREDPA (i.e., \algref{alg:DRREDPA}) fed with the input to decide the aggregate quotas delivered to each slice. Afterwards, it collects the valuation information from VSPs and passes it to VWPFA along with the slice quotas obtained before to get the final decision. MPSAC gives a polynomial-time method to deal with the per-slot optimization problem (P2) and works as an indirect approach to pursue a sub-optimal solution of (P1) in a time-efficient manner.

\subsection{Time Complexity Analysis}

In order to get the complexity of MPSAC, we demonstrate the complexity of DRREDPA and VWPFA respectively.

We first analyze \algref{alg:DRREDPA}. The initiation operation needs to set $|\mathcal{S}_n|$ variables, so it takes $O(|\mathcal{S}_n|)$. Inside the \textit{while loop}, calculating the available resources requires $|\mathcal{S}_n|$ matrix (in dimension $K$) multiplications and $|\mathcal{S}_n|$ matrix additions, it takes $O(K \cdot |\mathcal{S}_n|)$. Calculating revenue efficiency for one slice requires $K+1$ divisions and $K-1$ comparisons, so it takes $O(K \cdot |\mathcal{S}_n|)$ for all slices. Sorting takes $O( |\mathcal{S}_n| \cdot \text{log}(|\mathcal{S}_n|) )$. The condition judgment in \lineref{line:DRREDPA check resource priority} needs $K+|\mathcal{S}_n|$ comparisons, 2 additions and 1 division, so the \textit{for loop} in \lineref{line:DRREDPA traverse Sn} takes $O(K \cdot |\mathcal{S}_n| + |\mathcal{S}_n|^2)$. We conclude that \lineref{line:DRREDPA compute RE for Sn} to \lineref{line:DRREDPA traverse Sn break} takes $O(2K \cdot |\mathcal{S}_n| + |\mathcal{S}_n| \cdot \text{log}(|\mathcal{S}_n|) + |\mathcal{S}_n|^2)$. A similar analysis applies to \lineref{line:DRREDPA compute RE for Sn star} to \lineref{line:DRREDPA traverse Sn star break}, it takes at most $O(2K \cdot |\mathcal{S}_n| + |\mathcal{S}_n| \cdot \text{log}(|\mathcal{S}_n|))$ as $|\mathcal{S}^*_n| < |\mathcal{S}_n|$. Therefore, executing a \textit{while loop} takes up to $O(3K \cdot |\mathcal{S}_n| + |\mathcal{S}_n| \cdot \text{log}(|\mathcal{S}_n|) + |\mathcal{S}_n|^2)$. Limited by the capacity of physical resources, the maximum number of times that the \textit{while loop} can be executed is equal to the maximum number of slices that NSP~$n$ can accommodate, which is a constant that is independent from $K$ and $|\mathcal{S}_n|$. We omit the constants in complexity expression, then the complexity of DRREDPA is $O(K \cdot |\mathcal{S}_n| + |\mathcal{S}_n| \cdot \text{log}(|\mathcal{S}_n|) + |\mathcal{S}_n|^2)$.

Now we consider \algref{alg:VWPFA}. Inside the \textit{for loop} in \lineref{line:VWPFA traverse Sn}, the operation of traversing VSP takes $O(|\mathcal{V}_s|)$. A total of $\sum_{v \in \mathcal{V}_s}{l^t_{v,n}}$ subtractions are required to calculate the increments in \lineref{line:VWPFA compute increments}. \lineref{line:VWPFA sort increments} sorts these increments using quicksort. Therefore, \lineref{line:VWPFA compute increments} and \lineref{line:VWPFA sort increments} take a total of $O\left(\left[1 + \text{log}(\sum_{v \in \mathcal{V}_s}{l^t_{v,n}}) \right] \cdot \sum_{v \in \mathcal{V}_s}{l^t_{v,n}} \right)$. Allocating quotas for all $v$ in $\mathcal{V}_s$ requires at most $a^t_{n,s}$ additions, \lineref{line:VWPFA assign quota} takes $O(a^t_{n,s})$ totally. Pricing a quota for VSP~$v$ requires at most 1 division, 1 multiplication, and 1 comparison in \lineref{line:VWPFA find competitor} to \lineref{line:VWPFA price the last quota}, so it takes up to $O(3a^t_{n,v})$ to determine all the prices for quotas awarded to VSP~$v$ with the \textit{for loop} in \lineref{line:VWPFA price all quota for v}, then it takes $O\left(3 \cdot \sum_{v \in \mathcal{V}_s}{a^t_{n,v}}\right)$ to determine those for all VSPs in $\mathcal{V}_s$, which can be equivalently written as $O(3a^t_{n,s})$. Therefore, executing the \textit{for loop} in \lineref{line:VWPFA traverse Sn} for one time takes up to $O\big(|\mathcal{V}_s| + \left[1 + \text{log}(\sum_{v \in \mathcal{V}_s}{l^t_{v,n}}) \right] \cdot \sum_{v \in \mathcal{V}_s}{l^t_{v,n}} + 3a^t_{n,s} \big)$, executing the entire \textit{for loop} takes $O\big(|\mathcal{V}_n| + \sum_{s \in \mathcal{S}_n}{\left[1 + \text{log}(\sum_{v \in \mathcal{V}_s}{l^t_{v,n}}) \right] \cdot \sum_{v \in \mathcal{V}_s}{l^t_{v,n}}} +3 \cdot \sum_{s \in \mathcal{S}_n}{a^t_{n,s}} \big)$. Since $\sum_{s \in \mathcal{S}_n}{a^t_{n,s}}$ cannot exceed the maximum capacity, it can be regarded as a constant with a definite upper bound. Neglecting low-order terms and constants, the complexity of \algref{alg:VWPFA} is $O\big(|\mathcal{V}_n| + \sum_{s \in \mathcal{S}_n}{\left[ \text{log}(\sum_{v \in \mathcal{V}_s}{l^t_{v,n}}) \cdot \sum_{v \in \mathcal{V}_s}{l^t_{v,n}} \right]} \big)$.

With the above analysis, we assert that MPSAC is a polynomial-time algorithm with a complexity of $O\big( K \cdot |\mathcal{S}_n| + |\mathcal{S}_n| \cdot \text{log}(|\mathcal{S}_n|) + |\mathcal{S}_n|^2 + |\mathcal{V}_n| + \sum_{s \in \mathcal{S}_n}{\left[ \text{log}(\sum_{v \in \mathcal{V}_s}{l^t_{v,n}}) \cdot \sum_{v \in \mathcal{V}_s}{l^t_{v,n}} \right]} \big)$.

\section{Numerical Simulations} \label{sec:numerical simulations}

In this section, we describe the evaluation of MPSAC (i.e., DRREDPA together with VWPFA). First, the algorithms adopted for comparison are introduced briefly. Then, we give the metrics used for assessing their performance, followed by the details of the simulation environment and parameter settings. Finally, we discuss the results obtained.

\subsection{Comparison Algorithms}

Three existing algorithms designed for inter-slice admission control are employed as comparisons. Their main ideas are described below, with minor adjustments when necessary to make them applicable to our problem.

\begin{itemize}
	\item \textbf{PAGE} \cite{challa2019network}. It broadly divides services into two kinds: high-priority and low-priority, then statically and exclusively allocates a certain percentage of total resources to each service. We simply extend PAGE to the multi-service scenario investigated in this work by using the relative allocation ratio given by the authors.
	
	\item \textbf{MQSAC} \cite{han2020multiservice}. It characterizes the admission strategy with a preference matrix, each column of the matrix corresponds to the order in which slice admissions are attempted for a given state. MQSAC randomly generates multiple preference matrices and tests the performance when using different preference matrices. Herein, we take the average performance under all preference matrices as its benchmark.
	
	\item \textbf{DSARA} \cite{villota2022admission}. It is a DQN-based algorithm, where the state is defined as a vector of available resources in percentage, and the action is a vector of integers indicating the weights of slices that an admission result should obey.
\end{itemize}

As for intra-slice requests arriving at the same time, few works consider performing more complex admission operations between competing VSPs. Therefore, we adopt a typical baseline that allocates quotas on the proportion (OP) of each VSP's demand \cite{caballero2019network}.

We combine the above inter-slice algorithms and the intra-slice algorithm OP to construct several complete slice admission control strategies as the comparisons of MPSAC, which consists of DRREDPA and VWPFA.

\subsection{Performance Metrics}

In this work, we mainly focus on the following five indicators:

\begin{enumerate}
	\item \textbf{Average base revenue}. This metric corresponds to the original optimization objective formulated in \eqref{eq:obj 1}.
	
	\item \textbf{Inter-slice fairness}. This metric is formulated in \eqref{eq:inter slice fairness}, it visualizes whether the priority condition given by \eqref{eq:obj 1e} is met in each slot, and reflects the degree of fairness between slices.
	
	\item \textbf{Average VWPF}. The value of VWPF given in \eqref{eq:obj 4} is very sensitive to the available quotas $a^t_{n,s}$, while the quotas within a slice at the same time slot~$t$ may be quite different even the same inter-slice algorithm but distinct intra-slice algorithms are used. Therefore, in order to eliminate the confusion as much as possible when presenting the intra-slice proportional fairness, it will be better to perform slot averaging on the value of VWPF, as shown in \eqref{eq:AVWPF}:
	\begin{equation} \label{eq:AVWPF}
		\frac{1}{T} \sum_{t=1}^{T}{\sum_{v \in \mathcal{V}_s}{b^*_v \cdot \text{log}\left(a^t_{n,v} + \epsilon \right)}},
	\end{equation}
	
	\item \textbf{Average actual revenue}. This metric accounts for the slot average of NSP's real revenue, and it helps to check whether VWPFA has further improved the income of a NSP as expected. \eqref{eq:average actual revenue} shows the formula:
	\begin{equation} \label{eq:average actual revenue}
		\frac{1}{T} \sum^{T}_{t=1}{\sum_{s \in \mathcal{S}_n}{\left[\sum_{j=1}^{A^t_{n,s}}{p^{j}_{n,s}} + \sum_{v \in \mathcal{V}_s}{\sum_{i=1}^{a^t_{n,v}}{p^{-i}_{n,v}}}\right]}},
	\end{equation}

	where $p^j_{n,s}$ and $p^{-i}_{n,v}$ are the fees actually charged by NSP~$n$ for a specific slice instance.
	
	\item \textbf{Average time cost}. This metric evaluates the time overhead of admission procedures and takes the mean from multiple repeated simulations to reduce accidental deviations.
\end{enumerate}
\subsection{Simulation Environment}

The simulation is implemented on a server equipped with Intel(R) Xeon(R) Gold 5218R CPU @ 2.10GHz, NVIDIA GeForce RTX 3080 GPU, and 252G memory. The programming environment is Python 3.10.4, and Pytorch 1.13.1 is adopted as the machine learning platform. The driver versions are 515.65.01 and 11.7 for GPU and CUDA respectively.

\subsection{Parameter Settings}

Due to privacy issues, the network operators can hardly reveal a real trace data that contains detailed multidimensional resource overhead. Therefore, we follow the practice of comparison algorithms to use numerical simulations.

The market consists of 2 NSPs and 6 VSPs. A total of 5 slice types are supported, among which NSP~1 provides types $\left\{1, 2, 3, 4\right\}$, NSP~2 provides types $\left\{2, 3, 4, 5\right\}$. The set of tenants (i.e., VSPs) for each kind of slice is set to $\left\{1\right\}$, $\left\{2\right\}$, $\left\{3, 4\right\}$, $\left\{5\right\}$ and $\left\{6\right\}$ respectively. The arrival rate of each kind of slice request is set as a multiple of the base arrival rare $\lambda^G$, where the coefficients are 2, 1.5, 2.5, 1, and 1.5 respectively. By adjusting $\lambda^G$, we simulate different stress conditions to test the performance of algorithms. Tab.~\ref{tab:param setting} shows the values of all important parameters, including the resource capacity of NSPs, the resource demands and the base prices of slices, the rational behavior parameters of subscribers, and the VSPs' real valuations of slices.

\begin{table}[h]
	\renewcommand{\arraystretch}{1.5}
	\begin{center}
		\caption{Parameter Settings.}
		\label{tab:param setting}
		\begin{tabular}{ c c c c }
			\hline
			\textbf{Symbol} & \textbf{Value} & \textbf{Symbol} & \textbf{Value}\\
			\hline
			$\mathcal{S}_1$ & $\left\{1, 2, 3, 4\right\}$ & $\lambda^G$ & $\left\{2, 2.5, 3, 3.5, 4\right\}$ \\
			$\mathcal{S}_2$ & $\left\{2, 3, 4, 5\right\}$ & $\lambda^G_1$ & $2 \lambda^G$ \\
			$\boldsymbol{c}_{1,1}$ & $(0.5, 0.35, 0.35)^T$ & $\lambda^G_2$ & $1.5 \lambda^G$ \\
			$\boldsymbol{c}_{1,2}$ & $(0.7, 0.5, 0.45)^T$ & $\lambda^G_3$ & $2.5 \lambda^G$ \\
			$\boldsymbol{c}_{1,3}$ & $(0.7, 0.65, 0.6)^T$ & $\lambda^G_4$ & $\lambda^G$ \\
			$\boldsymbol{c}_{1,4}$ & $(0.8, 0.8, 0.8)^T$ & $\lambda^G_5$ & $1.5 \lambda^G$ \\
			$\boldsymbol{c}_{2,2}$ & $(0.7, 0.5, 0.45)^T$ & $\lambda^L_1$, $\lambda^W_1$ & 4.0, 4.0 \\
			$\boldsymbol{c}_{2,3}$ & $(0.7, 0.65, 0.6)^T$ & $\lambda^L_2$, $\lambda^W_2$ & 3.0, 4.0 \\
			$\boldsymbol{c}_{2,4}$ & $(0.8, 0.8, 0.8)^T$ & $\lambda^L_3$, $\lambda^W_3$ & 4.0, 5.0 \\
			$\boldsymbol{c}_{2,5}$ & $(0.7, 0.7, 0.9)^T$ & $\lambda^L_4$, $\lambda^W_4$ & 4.0, 3.0 \\
			$\boldsymbol{C}_1$ & $(25, 20, 20)^T$ & $\lambda^L_5$, $\lambda^W_5$ & 3.0, 4.0 \\
			$\boldsymbol{C}_2$ & $(20, 20, 25)^T$ & $\alpha$, $\beta$, $\epsilon$ & 0.5, 0.1, 1.0 \\
			$p_{1,1}$, $p_{1,2}$ & 1.0, 1.4 & $\mathcal{V}_1$, $\mathcal{V}_2$, $\mathcal{V}_3$ & $\left\{1\right\}$, $\left\{2\right\}$, $\left\{3, 4\right\}$ \\
			$p_{1,3}$, $p_{1,4}$ & 1.6, 2.0 & $\mathcal{V}_4$, $\mathcal{V}_5$ & $\left\{5\right\}$, $\left\{6\right\}$ \\
			$p_{2,2}$, $p_{2,3}$ & 1.4, 1.6 & $b^*_1$, $b^*_2$, $b^*_3$ & 2.5, 3.5, 4.5 \\
			$p_{2,4}$, $p_{2,5}$ & 2.0, 2.3 & $b^*_4$, $b^*_5$, $b^*_6$ & 6.0, 5.0, 5.5 \\
			\hline 
		\end{tabular}
	\end{center}
\end{table}

During the simulations, we replace the SAC algorithm of NSP~2 while fixing that of NSP~1 to MPSAC to get rid of irrelevant interference. 50 repeated tests are carried out under each set of parameter settings. By the way, 100 distinct preference matrices are randomly generated for MQSAC.

\subsection{Performance Evaluation} \label{sec:performance evaluation}

In this subsection, we exhibit the above five metrics of NSP~2 when different algorithms are employed.

\begin{figure*}[!ht]
	\centering
	\subfloat[$\lambda^G = 2$]{
		\includegraphics[width=0.33\linewidth]{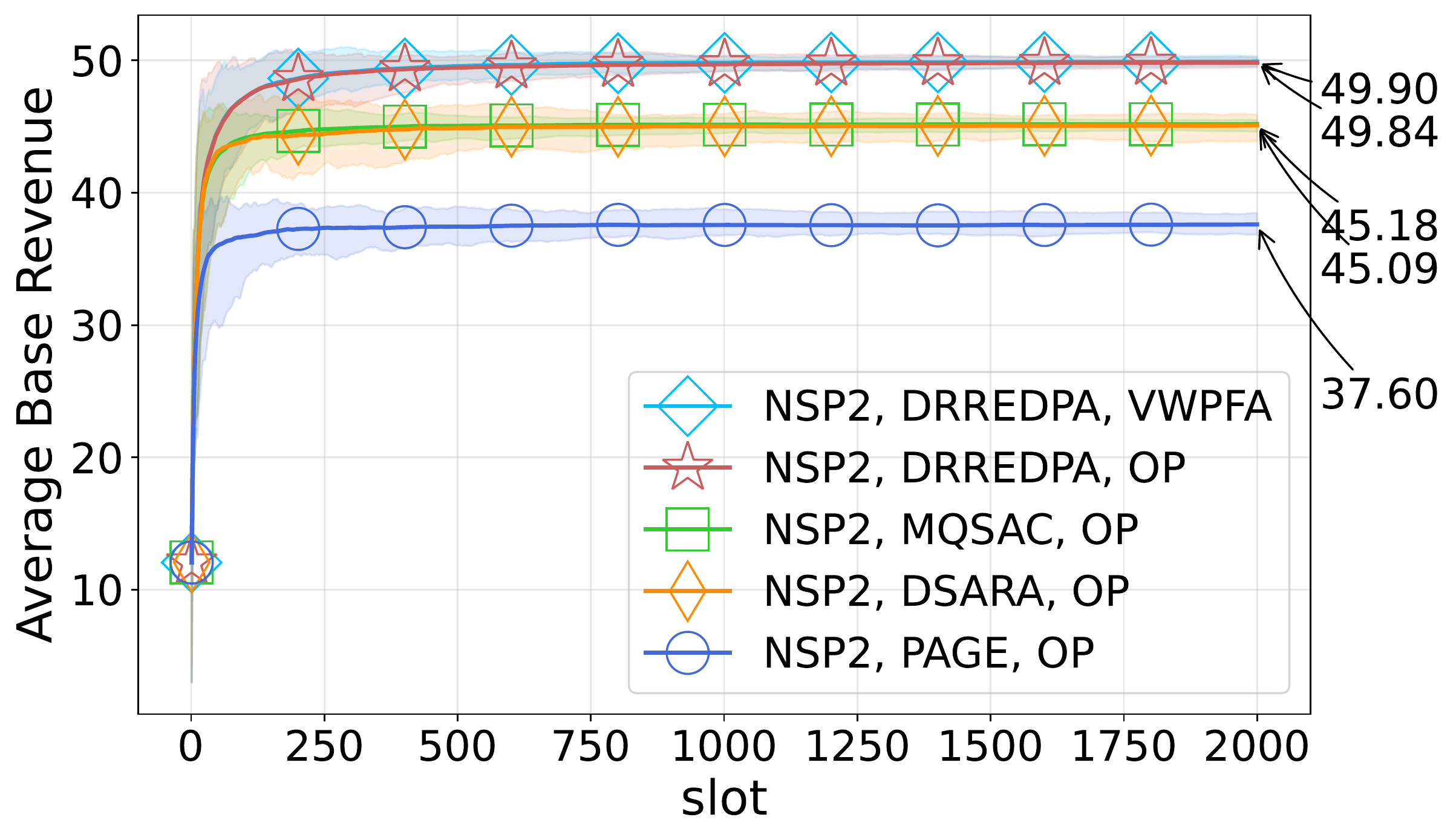}
		\label{fig:long-term average base revenue lambda 2}
	}
	\subfloat[$\lambda^G = 3$]{
		\includegraphics[width=0.33\linewidth]{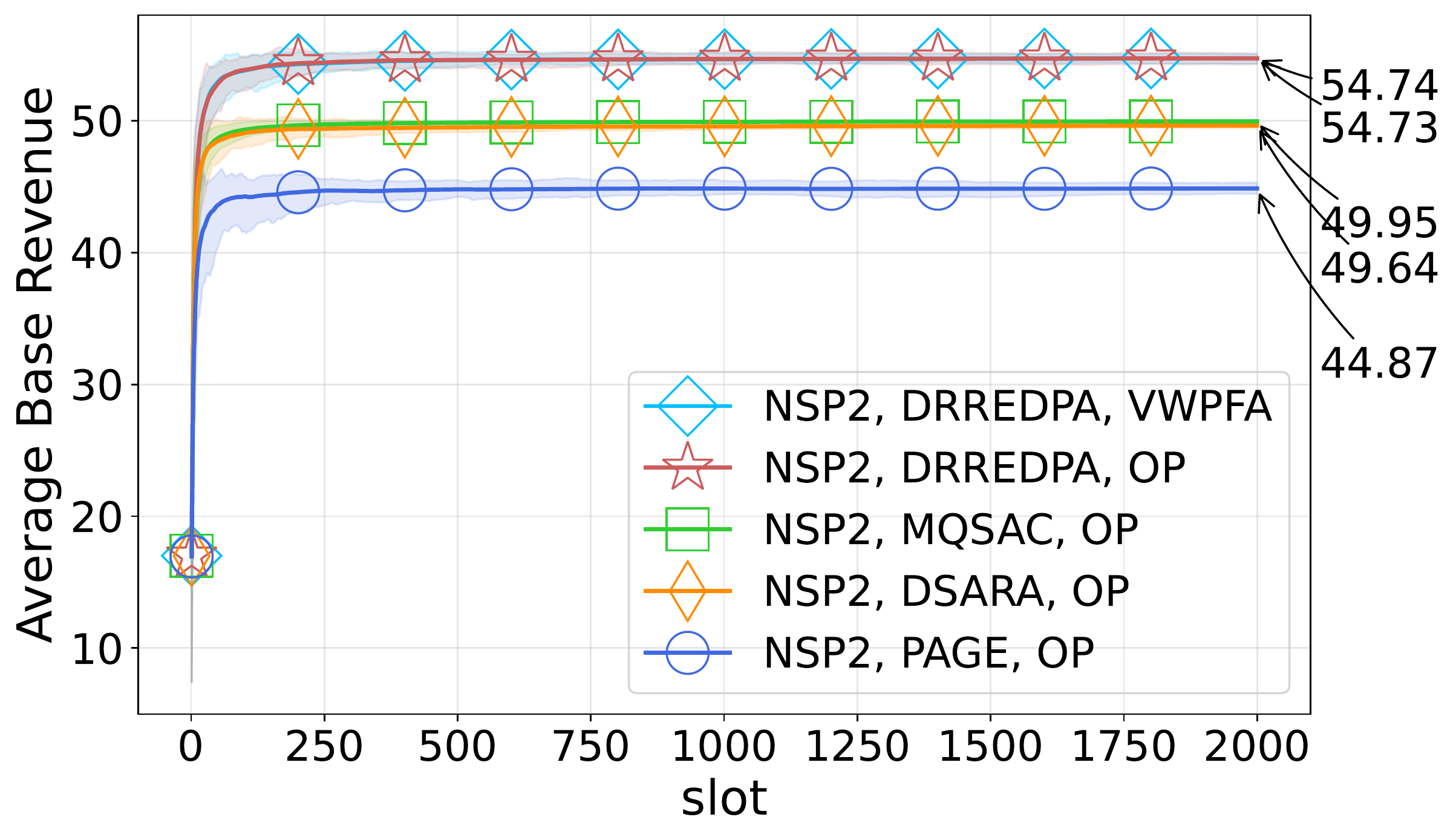}
		\label{fig:long-term average base revenue lambda 3}
	}
	\subfloat[$\lambda^G = 4$]{
		\includegraphics[width=0.33\linewidth]{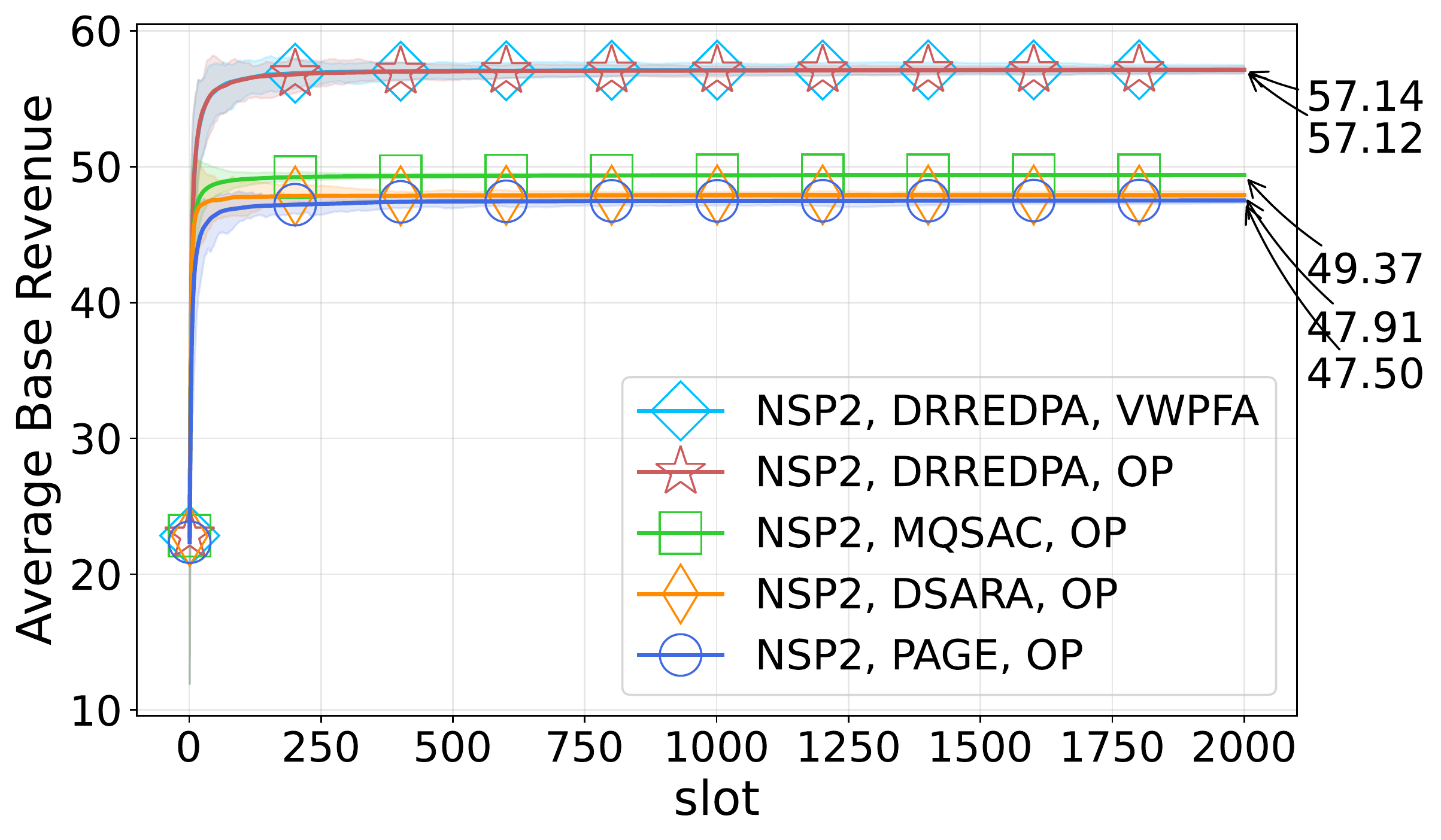}
		\label{fig:long-term average base revenue lambda 4}
	}
	\caption{Long-term average base revenue}
	\label{fig:long-term average base revenue}
\end{figure*}

\subsubsection{Average base revenue}

The long-term average base revenue is depicted in \figref{fig:long-term average base revenue} for 2000 slots, where the solid line shows the mean result of 50 repeated experiments under the corresponding algorithm, and the surrounding shadow is the envelope formed by these parallel data. DRREDPA with VWPFA (i.e., MPSAC), as well as DRREDPA with OP, have always maintained obvious advantages; the lower boundaries of the two are even higher than the upper boundaries of others. MQSAC and DSARA take the second place, and PAGE is the worst one. This phenomenon does not change substantially with the fluctuation of $\lambda^G$, the reasons of which are analyzed below.

MQSAC lacks the ability to design a customized preference matrix, which limits its performance to a certain extent. The more complex the scene is, the more obvious the negative impact yielded by this shortage will be. This is the main reason why MQSAC is inferior to the algorithm we proposed. Although DSARA is a DQN-based algorithm, its actual performance is still not as good as MPSAC. DSARA inherits the typical drawback of Q-learning with limited action space, the optimal action in some states may not be included in the action space, thereby compromising the optimization space of the model. Moreover, due to the influence of rational behavior from multiple participants, the environment observed by individual NSPs does not necessarily satisfy the Markov property, which seriously degrades the performance of reinforcement learning methods. The gap of average base revenue between PAGE and other algorithms shrinks with the growth of  $\lambda^G$. When $\lambda^G$ increases, the total number of requests initiated by subscribers rises, and more high-priority requests appear. In that circumstances, the mechanism of PAGE which reserves more resources for high-priority requests takes effect, leading to a promotion in revenue. However, the under-utilization of resources caused by static allocation cannot be eliminated thoroughly, so it can never catch up with other algorithms.

We can also see from \figref{fig:long-term average base revenue} that the curves and shaded areas of DREEDPA with VWPFA and DREEDPA with OP are highly consistent. It demonstrates that the influence on the long-term average base revenue through changing the intra-slice allocation scheme from OP to VWPFA is completely negligible. This is in line with the analysis elaborated in Section~\ref{sec:algorithm design} that VWPFA does not damage the base revenue of the whole SAC approach in each individual slot.

\subsubsection{Inter-slice fairness}

\begin{figure*}[!ht]
	\centering
	\subfloat[$\lambda^G = 2$]{
		\includegraphics[width=0.33\linewidth]{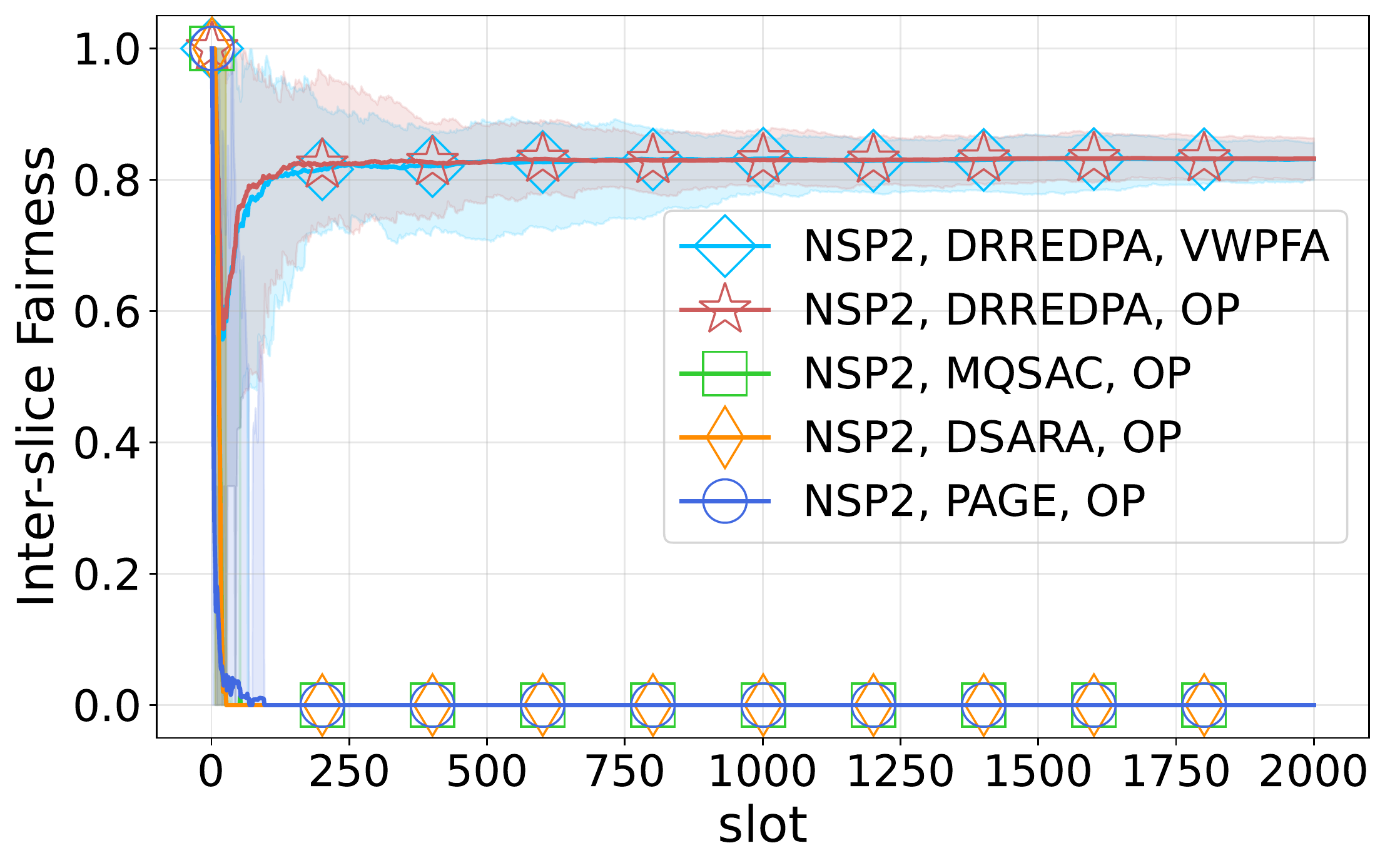}
		\label{fig:inter-slice fairness lambda 2}
	}
	\subfloat[$\lambda^G = 3$]{
		\includegraphics[width=0.33\linewidth]{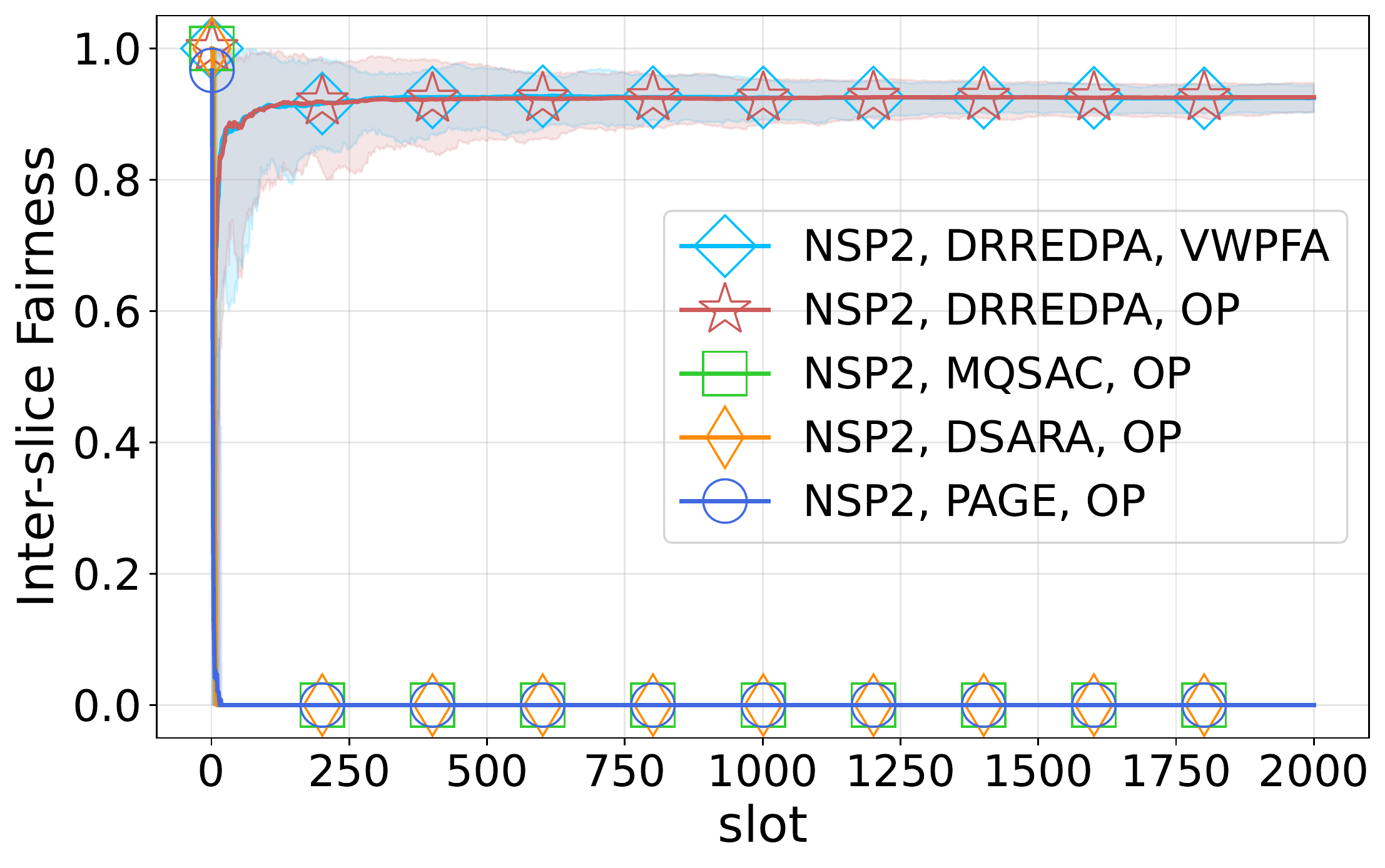}
		\label{fig:inter-slice fairness lambda 3}
	}
	\subfloat[$\lambda^G = 4$]{
		\includegraphics[width=0.33\linewidth]{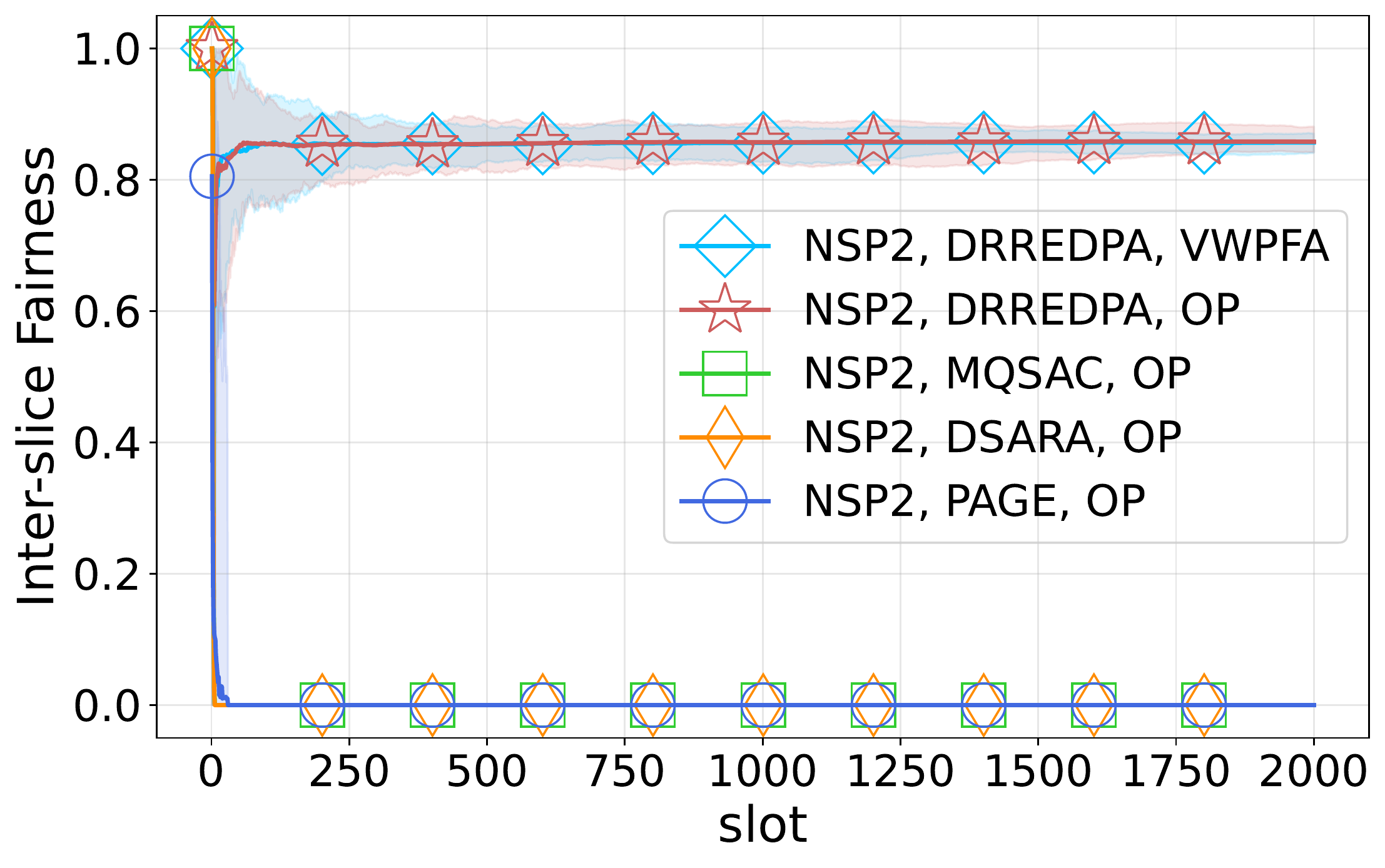}
		\label{fig:inter-slice fairness lambda_4}
	}
	\caption{Inter-slice fairness}
	\label{fig:inter-slice fairness}
\end{figure*}

In this study, the service priority constraint is defined on the average acceptance ratio of adjacent priority slices, listing the ratio curve of each slice under all algorithms in one figure makes it bothering for readers to distinguish whether the ratio increases with priority. Fortunately, when the priority condition is satisfied, the inter-slice fairness index defined by \eqref{eq:inter slice fairness} takes a positive value and 0 otherwise, which provides an explicit way to judge the status of the priority condition.

The inter-slice fairness is presented in \figref{fig:inter-slice fairness}. The cumulative acceptance ratio of all slices is 0 before receiving any slice request, so the inter-slice fairness is initiated to 1. The calculation method of the cumulative acceptance ratio determines that it is sensitive to the new access volume when the amount of requests perceived is small. Therefore, each algorithm experienced a large shock in the early stage, but algorithms with DRREDPA will gradually stabilize to a higher value, while all comparison methods quickly drop to 0. This agrees with the design philosophy of MQSAC and DSARA; they only care about revenue and do not consider maintaining the priority differences of heterogeneous slices. Although PAGE uses the static resource allocation ratio to distinguish different slices, it does not take into account the fact that the resource consumption and request volume of these slices may vary dramatically. In our settings, the third slice provides the most popular service type, with the highest total number of requests from subscribers, but its priority is only in the middle among all slices. The resources reserved by PAGE for slice~3 are clearly not enough to carry its total resource requirements, which makes the algorithm unable to maintain the priority constraints well.

It is worth mentioning that the shaded areas of the two methods using DRREDPA in \figref{fig:inter-slice fairness} do not completely overlap, that is because although VWPFA does not change the average base revenue in a single slot, its decision will affect the queue length of VSPs in the future, and then affect choices of subscribers. Therefore, in the long run, NSP~2 goes through a slightly different state when OP is replaced by VWPFA. However, it turns out that VWPFA hardly causes a noticeable impact on the performance of DRREDPA, which also means that VWPFA can basically be transparent between slices even used for a long-term process.

A higher inter-slice fairness metric not only means that DRREDPA has successfully maintained the hierarchical relationship between services, but also helps attract vertical service providers to establish closer cooperation with network slice providers, thereby obtaining greater revenue margins, which has already been corroborated in \figref{fig:long-term average base revenue}.

\subsubsection{Average VWPF}

\begin{figure*}[!ht]
	\centering
	\subfloat[$\lambda^G = 2$]{
		\includegraphics[width=0.33\linewidth]{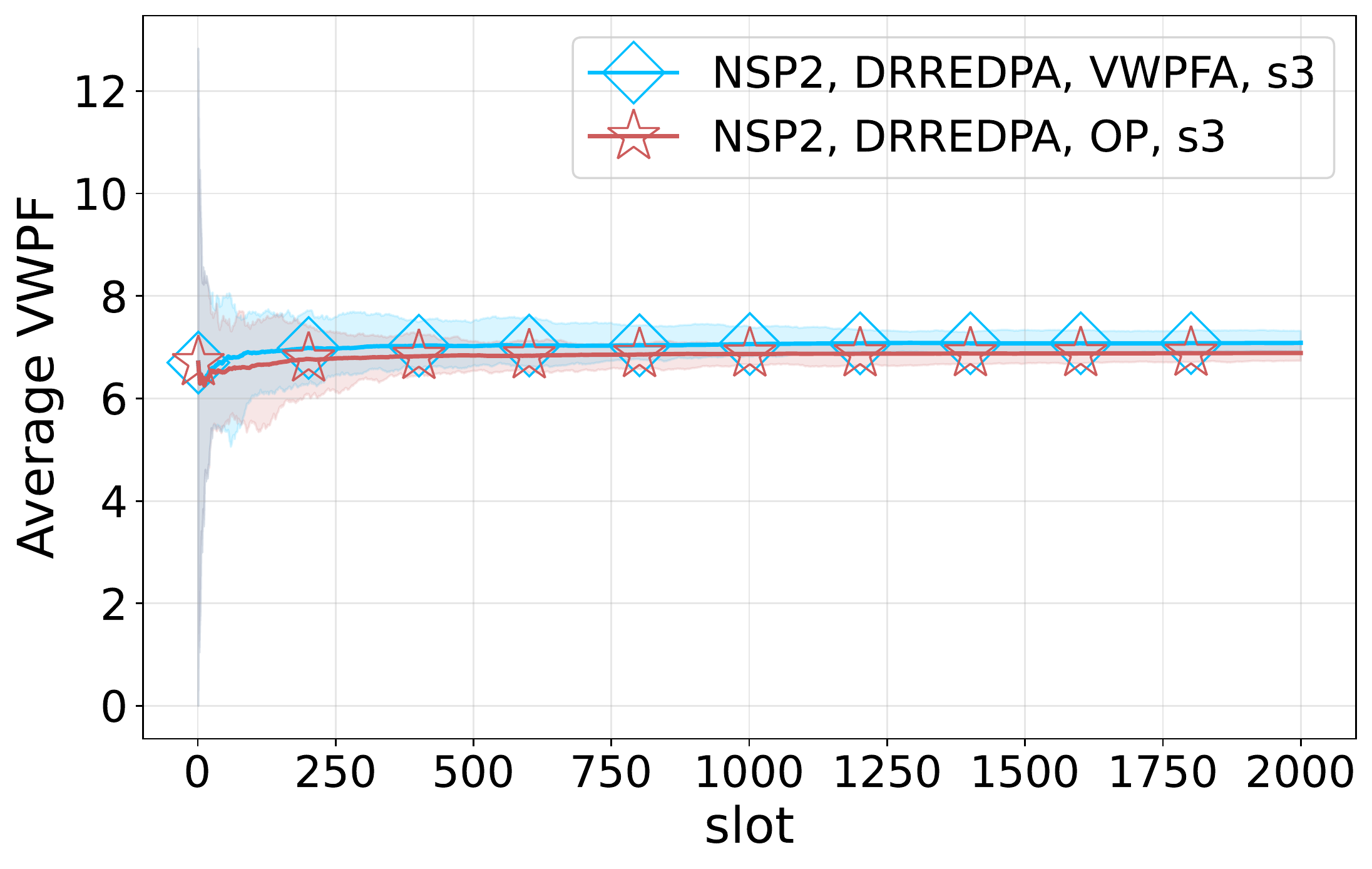}
		\label{fig:avg vwpf lambda 2}
	}
	\subfloat[$\lambda^G = 3$]{
		\includegraphics[width=0.33\linewidth]{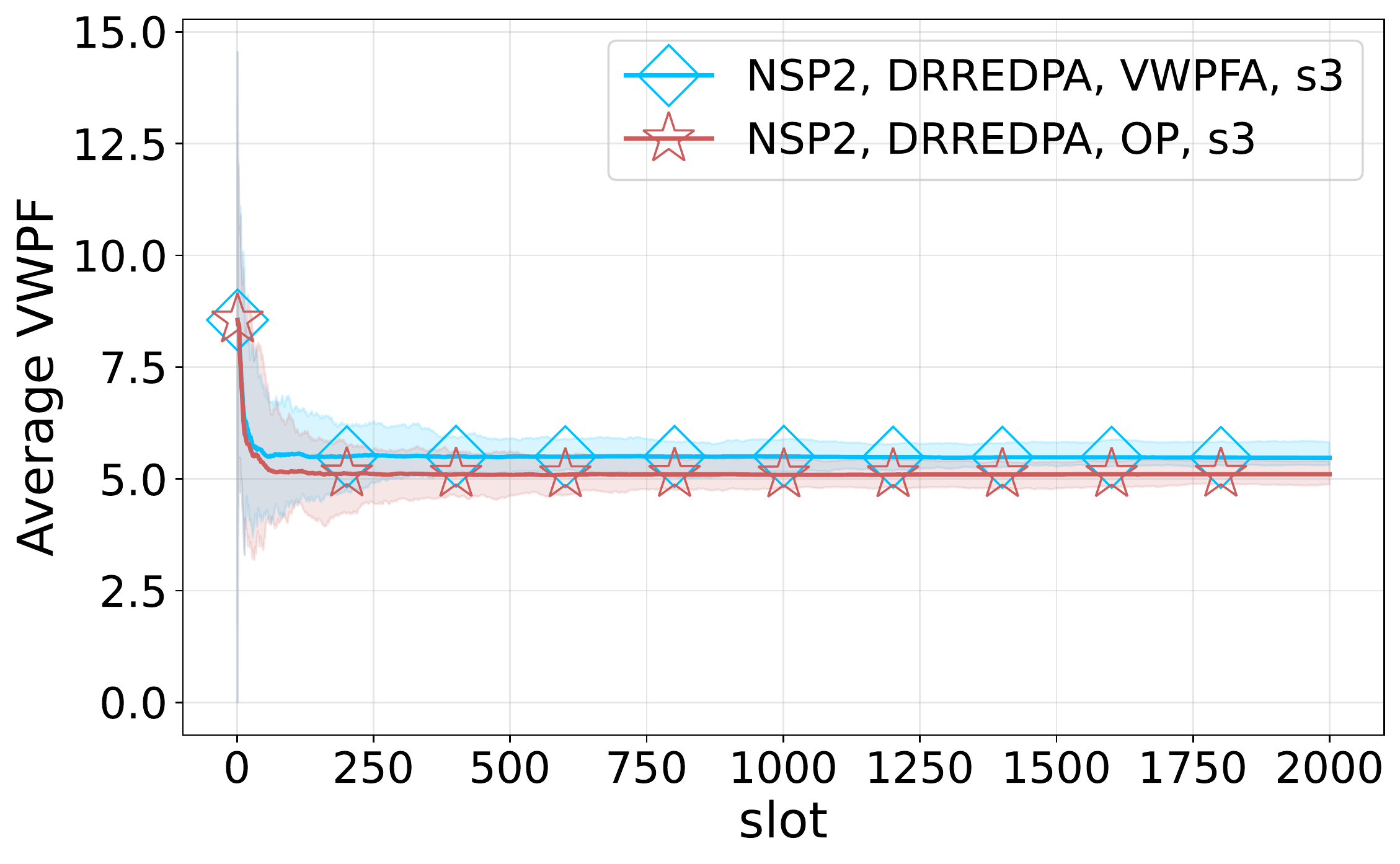}
		\label{fig:avg vwpf lambda 3}
	}
	\subfloat[$\lambda^G = 4$]{
		\includegraphics[width=0.33\linewidth]{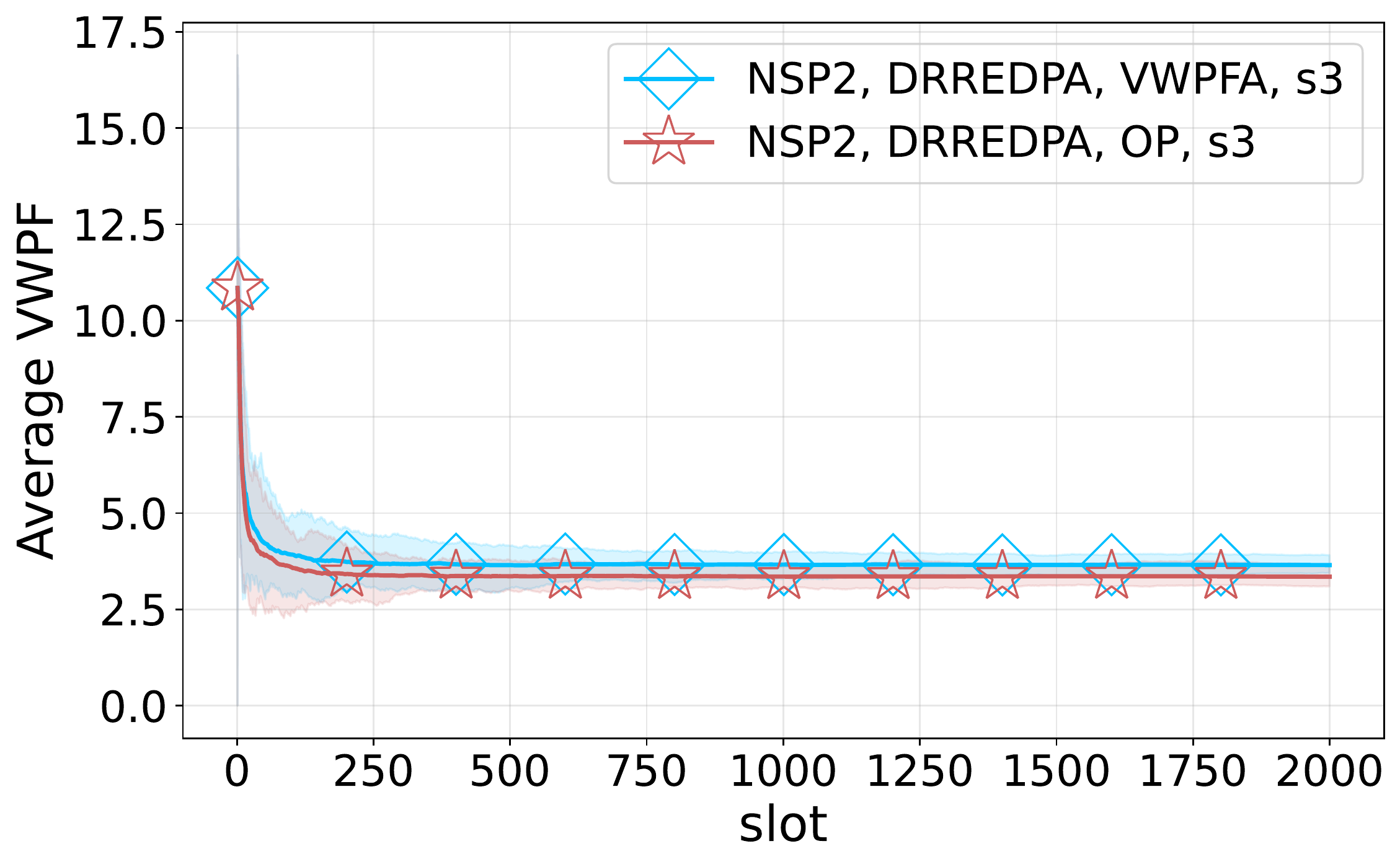}
		\label{fig:avg vwpf lambda 4}
	}
	\caption{Average value-weighted proportional fairness}
	\label{fig:avg vwpf}
\end{figure*}

\figref{fig:avg vwpf} shows the average VWPF of the third slice when the inter-slice algorithm is fixed to DRREDPA and the intra-slice algorithm is either VWPFA or OP. It can be seen that the curve of VWPFA is always higher than that of OP, and the bounds of the shadow also always exceed the counterparts belonging to OP. The improvement seems small because only two VSPs are assumed as competitors for slice~3, and \eqref{eq:AVWPF} is a submodular function. With the increase of the allocation amount, the function increment shows a marginal effect. When $\lambda^G$ gets larger, requests for slice~4 and 5 increase, which are more privileged and resource-consuming than slice~3. Therefore, the algorithm compresses a part of the quotas of slice~3, leading to a decrease in the absolute value of average VWPF.

\subsubsection{Average actual revenue}

\begin{figure*}[!ht]
	\centering
	\subfloat[$\lambda^G = 2$]{
		\includegraphics[width=0.33\linewidth]{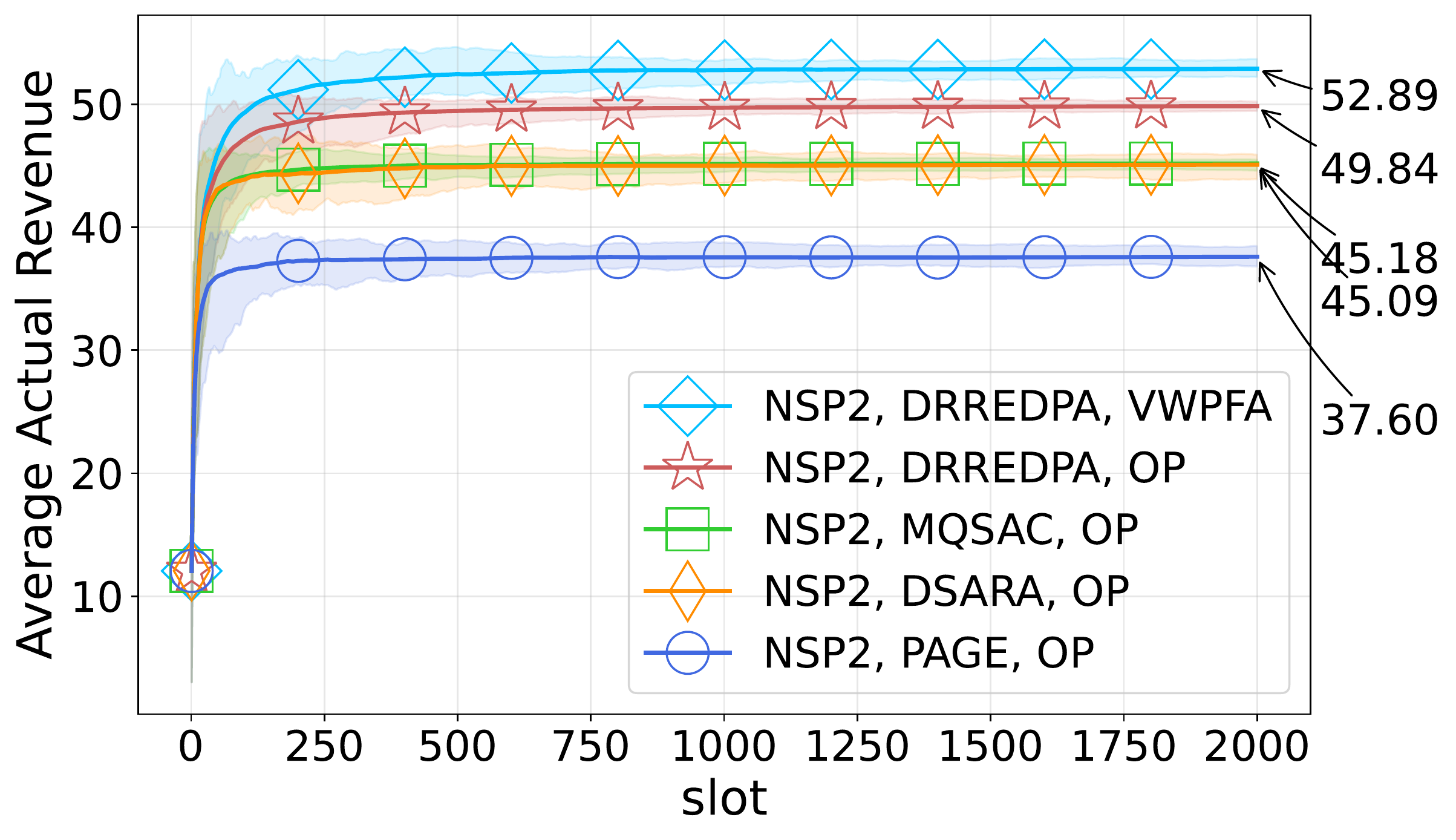}
		\label{fig:long-term average actual revenue lambda 2}
	}
	\subfloat[$\lambda^G = 3$]{
		\includegraphics[width=0.33\linewidth]{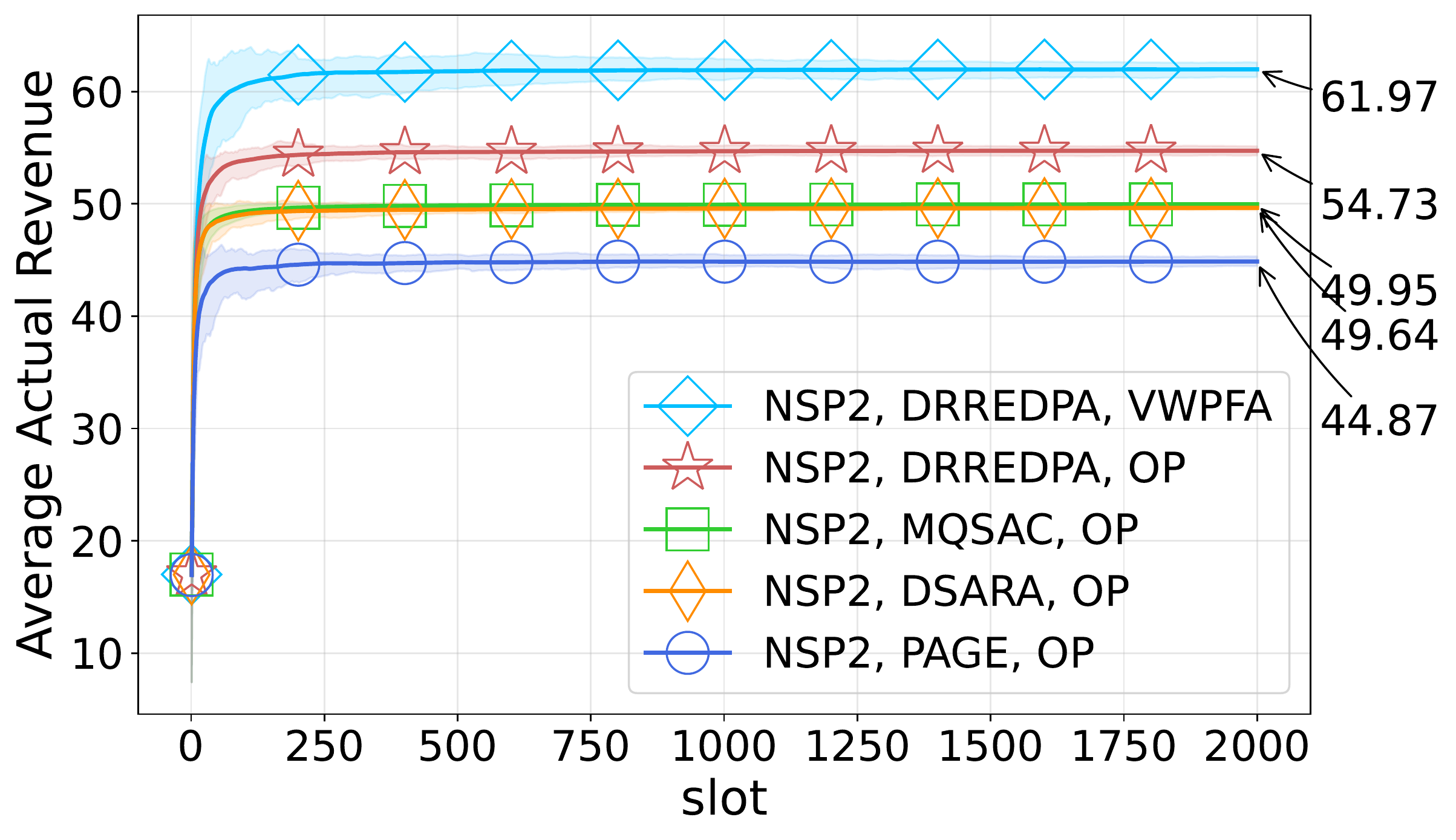}
		\label{fig:long-term average actual revenue lambda 3}
	}
	\subfloat[$\lambda^G = 4$]{
		\includegraphics[width=0.33\linewidth]{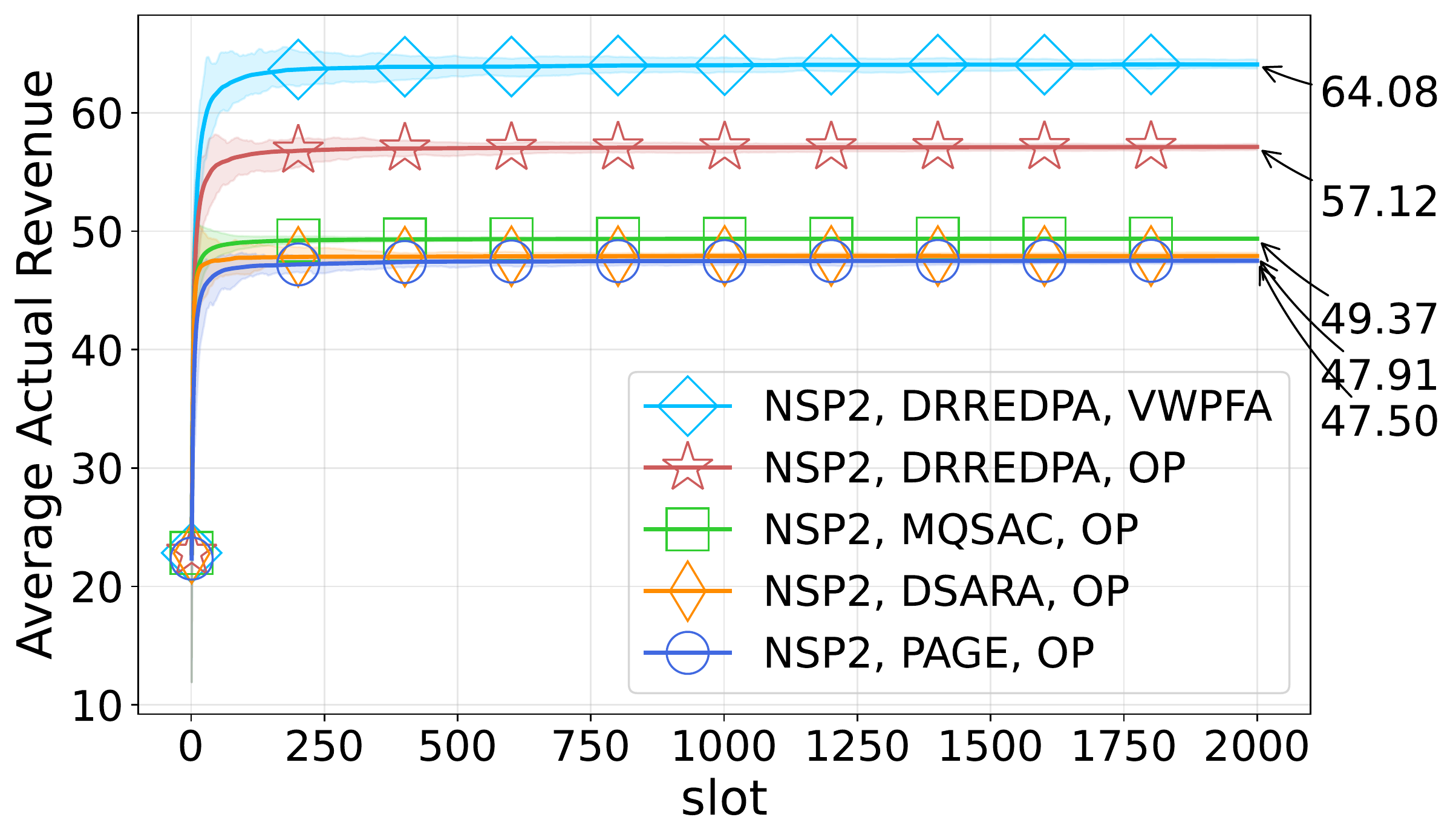}
		\label{fig:long-term average actual revenue lambda 4}
	}
	\caption{Long-term average actual revenue}
	\label{fig:long-term average actual revenue}
\end{figure*}

We have also evaluated the long-term average actual revenue for different algorithms, the results are depicted in \figref{fig:long-term average actual revenue}. Except for MPSAC (i.e., DRREDPA with VWPFA), the long-term average actual revenue of others is the same as their long-term average base revenue shown in \figref{fig:long-term average base revenue}. The OP method has no ability to force VSPs to disclose their true valuation, so NSP~2 can only charge the base price, so as to prevent the strategic behavior of VSPs from seriously damaging its income and avoid charging tenants more than they can afford. Things become different for VWPFA, the designed allocation rule requires VSPs to give their valuations, while the charging rule restricts bidding honestly as the only dominant strategy for VSPs. As long as the bids received are credible, NSP~2 can price the admission quota more reasonably, ensuring that it will not be lower than the reserved price nor exceed the true valuation of VSPs. It can be seen from \figref{fig:long-term average actual revenue} that the auction mechanism we designed has indeed further promoted the actual income of the NSP.

Combining \figref{fig:long-term average base revenue} to \figref{fig:long-term average actual revenue}, we conclude that VWPFA will not interfere with the admission decision between slices, can achieve fair quota allocation within a slice, and has the potential to further enhance the revenue of NSPs.

\subsubsection{Average time cost}

\begin{figure}[!th]
	\centering
	\includegraphics[width=0.66\linewidth]{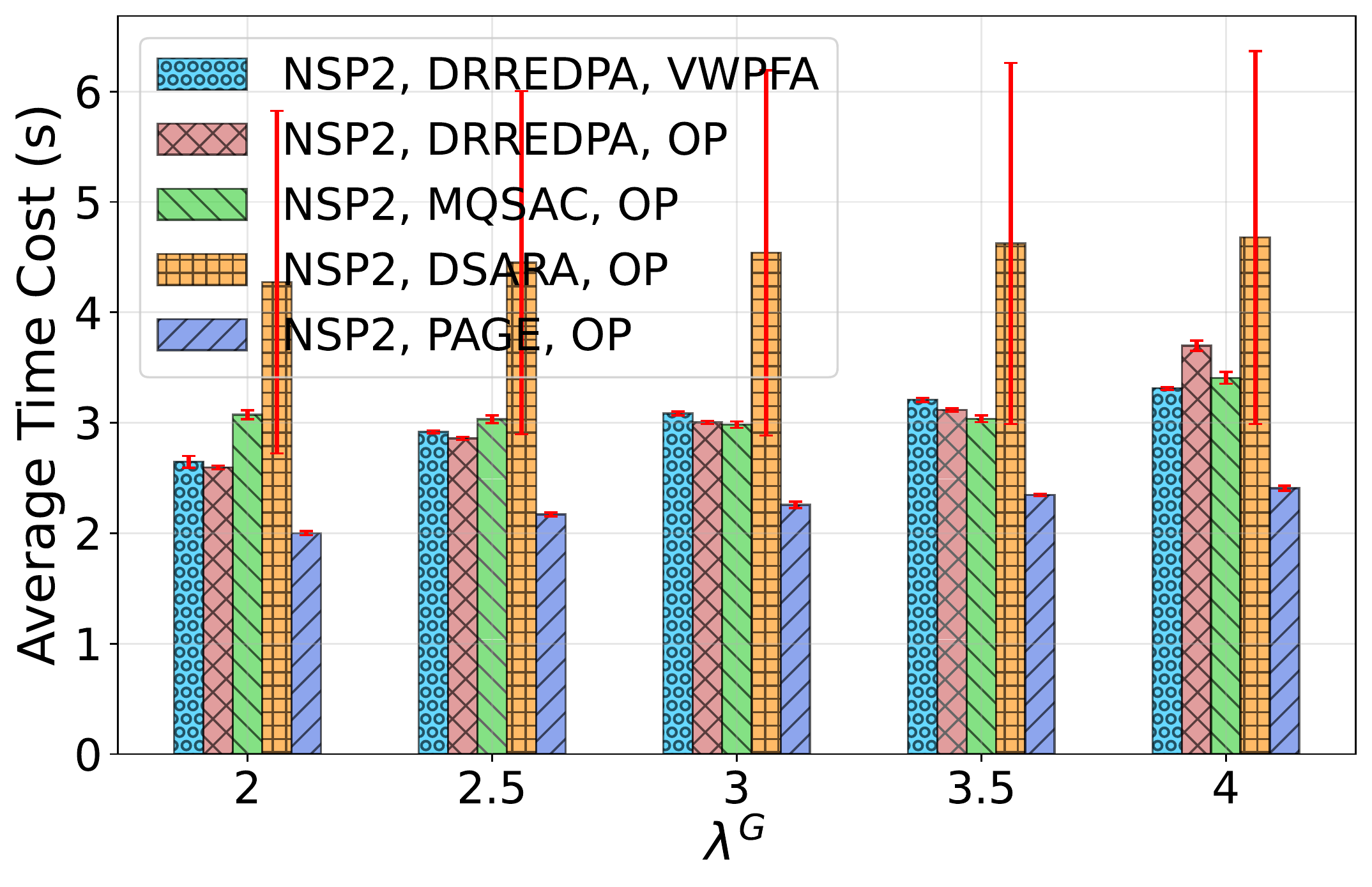}
	\caption{Time consumption}
	\label{fig:time cost}
\end{figure}

Finally, we measure the overall time consumption of each algorithm for 2000 slots, \figref{fig:time cost} presents the mean of 50 repeated statistics. PAGE only considers the least constraints, so its running time is always the shortest. Those of DRREDPA with VWPFA, DRREDPA with OP, and MQSAC with OP are relatively close, the first two are sometimes slightly higher than the third one because DRREDPA needs to frequently calculate the dominant resource efficiencies and sort them. MPSAC is almost always slightly slower than DRREDPA with OP, due to the operation of calculating increments and pricing quotas during the auction process. DSARA takes the longest period for the essential model training in the early stage, so it is worse than others in terms of both average time cost and error deviation.

Although MPSAC we proposed produces some additional overhead in running time compared to some existing algorithms, the total amount is still maintained at a very low level and will not cause timeliness problems. Moreover, its time overhead will not increase rapidly with the increase in service pressure. Considering the advantages of MPSAC, these small sacrifices are totally worth it.

\section{Conclusion and Future Works} \label{sec:conclusion and future works}

Network slicing technology is an important enabler for network operators to implement diversified and differentiated services on generic facilities. However, the research on slice admission control is not deep enough, especially in the scenario where multiple rational participants coexist.

In this paper, we have proposed MPSAC for maximizing the long-term average revenue of NSPs, with the constraints of multidimensional resource feasibility and slice priority relationship. We theoretically proved that the problem is intractable in polynomial time, then designed an inter-slice admission method based on dominant resource revenue efficiency to adequately and preferentially allocate resources, and devised an auction mechanism with DSIC property for intra-slice quota assignment to curb malicious fraud of slice tenants. The results show that MPSAC achieved (at least) up to 9.6\%, 10.3\%, and 20.3\% greater long-term average base revenue than those obtained by MQSAC, DSARA, and PAGE, and it increases to 17.1\%, 17.3\%, and 34.9\% in terms of long-term average actual revenue, benefiting from the capability of the proposed auction mechanism to reveal the true valuations of VSPs. Furthermore, MPSAC can well maintain the priority requirements between slices, while other comparisons cannot, and the allocation rule of the devised auction mechanism achieves the value-weighted proportional fairness within slices. In addition, the extra time overhead brought by these performance improvements is quite limited. These results collectively demonstrate the significance and value of our algorithm.

In the future, we plan to explore the potentiality of Multi-agent Deep Reinforcement Learning (MADRL) on the problem considered in this work. Compared with a single agent, MADRL can solve the non-static issue of the environment observed by individual agents through expanding the state dimension to include observations of other agents. It means that MADRL is expected to fundamentally improve the not-so-good performance of DSARA revealed in \secref{sec:performance evaluation}. However, in a non-cooperative scenario, how can an individual agent obtain the observations of others and how to achieve efficient and high-quality convergence of multiple neural network models still need and deserve further investigations.

\bibliographystyle{IEEEtran}
\bibliography{ref}

\end{document}